  \providecommand\BibTeX{{%
    \normalfont B\kern-0.5em{\scshape i\kern-0.25em b}\kern-0.8em\TeX}}}
\theoremstyle{acmplain}
\newtheorem{theorem}{Theorem}[section]
\newtheorem{conjecture}[theorem]{Conjecture}
\newtheorem{proposition}[theorem]{Proposition}
\newtheorem{lemma}[theorem]{Lemma}
\newtheorem{corollary}[theorem]{Corollary}
\theoremstyle{acmdefinition}
\newtheorem{example}[theorem]{Example}
\newtheorem{definition}[theorem]{Definition}
\newtheorem{cexample}[theorem]{Counterexample}
\theoremstyle{remark}
\newtheorem{remark}[theorem]{Remark}
\newcommand{\y}{\boldsymbol{y}}
\renewcommand{\a}{\boldsymbol{a}}
\renewcommand{\b}{\boldsymbol{b}}
\renewcommand{\d}{\boldsymbol{d}}
\newcommand{\f}{\boldsymbol{f}}
\renewcommand{\i}{\boldsymbol{i}}
\renewcommand{\j}{\boldsymbol{j}}
\newcommand{\s}{\boldsymbol{s}}
\newcommand{\balpha}{\boldsymbol{\alpha}}
\renewcommand{\u}{\boldsymbol{u}}
\renewcommand{\v}{\boldsymbol{v}}
\newcommand{\e}{\boldsymbol{e}}
\newcommand{\x}{\boldsymbol{x}}
\newcommand{\bomega}{\boldsymbol{\omega}}
\newcommand{\brho}{\boldsymbol{\rho}}
\newcommand{\bdelta}{\boldsymbol{\delta}}
\newcommand{\bvarphi}{\boldsymbol{\varphi}}
\newcommand{\0}{\boldsymbol{0}}
\newcommand{\K}{\mathbb{K}}
\newcommand{\Kx}{\mathbb{K}[x]}
\newcommand{\Kxx}{\mathbb{K}[\boldsymbol{x}]}
\newcommand{\N}{\mathbb{N}}
\newcommand{\Z}{\mathbb{Z}}
\newcommand{\cA}{\mathcal{A}}
\newcommand{\cM}{\mathcal{M}}
\newcommand{\cN}{\mathcal{N}}
\newcommand{\cF}{\mathcal{F}}
\newcommand{\cP}{\mathcal{P}}
\newcommand{\cS}{\mathcal{S}}
\newcommand{\cK}{\mathcal{K}}
\newcommand{\bcF}{\overline{\cF}}
\newcommand{\bs}{\overline{s}}
\newcommand{\Mon}{\mathsf{Mon}}
\newcommand{\rank}{\text{rank}}
\newcommand{\rdeg}{\text{rdeg}}
\newcommand{\eps}{\boldsymbol{\varepsilon}}
\newcommand{\p}{\boldsymbol{p}}
\newcommand{\q}{\boldsymbol{q}}
\newcommand{\dr}{\mathsf{d}}
\newcommand{\bdr}{\boldsymbol{\dr}}
\newcommand{\pr}{\mathsf{p}}
\newcommand{\bpr}{\boldsymbol{\pr}}
\newcommand{\barK}{\overline{K}}
\newcommand{\babarK}{\widetilde{K}}
\begin{document}

\setlength{\abovedisplayskip}{3pt}
\setlength{\belowdisplayskip}{3pt}

\title{On the Uniqueness of Simultaneous Rational Function Reconstruction}

\author{Eleonora Guerrini, Romain Lebreton, Ilaria Zappatore}
\email{guerrini, lebreton, zappatore@lirmm.fr}
\affiliation{%
  \institution{LIRMM, Universit\'e de Montpellier, CNRS}
  \city{Montpellier}
  \state{France}
}

\begin{abstract}
This paper focuses on the problem of reconstructing a vector of rational functions given some evaluations, or more generally given their remainders modulo different polynomials.
The special case of rational functions sharing the same denominator, \emph{a.k.a.}
Simultaneous Rational Function Reconstruction (SRFR), has many applications from linear system solving to coding theory, provided that SRFR has a unique solution.
The number of unknowns in SRFR is smaller than for  a general vector of rational function. This allows to reduce the number of evaluation points needed to guarantee the existence of a solution,  but we may lose its uniqueness. In this work, we prove that uniqueness is guaranteed for a generic instance. 
\end{abstract}

\maketitle
\section{Introduction}
The Vector rational function reconstruction (VRFR) is the problem of finding all rational functions $\v/\boldsymbol{d} = (v_1/d_1,\dots,v_n/d_n)$ which satisfy some degree constraints, given a certain number of their evaluations $(\v/\boldsymbol{d})(\alpha_j) = \bomega_j$. We consider a  generalized version of this problem, where we suppose to know the images modulo different polynomials $a_1,\ldots, a_n$, \emph{i.e.} $u_i = v_i/d \bmod a_i$ for $1\leq i \leq n$.
The \textit{Simultaneous Rational Function Reconstruction} (SRFR) problem is a
particular case of the vector rational function reconstruction where the rational functions $\v/d = (v_1/d,\dots,v_n/d)$ share the same denominator (see Section~\ref{sec: SRFR}). 
We can apply the SRFR in different problems: from the decoding of classic and interleaved Reed-Solomon codes to the polynomial linear system solving.
As in the classic rational function reconstruction we focus on the homogeneous linear system related to our equations in its weaker form, \emph{i.e.} $\v-d\u \equiv 0 \bmod \a$.
If the number of equations is equal to the number of unknowns minus one then there always exists a non-trivial solution.
From now on, we will assume to be in this case.
Note that the common denominator constraint of SRFR implies less unknowns than general VRFR, so less equations. 
This has a direct impact on the complexity of its applications.
However, the uniqueness in not anymore guarantied as shown in Counterexample~\ref{ex:counter}.
Having a unique solution is fundamental for decoding algorithms or Evaluation-Interpolation methods (like for instance in linear system solving). This paper focuses on the conditions that guarantee the uniqueness of solutions of the SRFR.


Previous works show that in the application of SRFR for polynomial linear system solving, the uniqueness is ensured under some specific degree conditions \cite{olesh_vector_2007}.
We have reasons to believe that we can generalize this result: we
conjecture that for almost all $(\v,d)$ the SRFR problem admits a unique solution (see Conjecture~\ref{conj:fg-gen}). 

We can learn more about the conditions of uniqueness from the results coming from error correcting codes. Interleaved Reed Solomon Codes (IRS) can be seen as the evaluation of a vector of polynomials $\v$. The problem of decoding IRS codes consists in the reconstruction of the vector of polynomials $\v$ by its evaluations, some possibly erroneous.
A classic approach to decode IRS codes is the application of the SRFR for instances $\u = \v + \e$ where $\e$ are the errors. Results from coding theory show that for all $\v$ and almost all errors $\e$, we get the uniqueness of SRFR for the corresponding instance $\u$ (provided that there are not too many errors) \cite{bleichenbacher_decoding_2003,brown_probabilistic_2004,schmidt_collaborative_2009}. There is a natural generalization of SRFR when errors occur (SRFRwE, see Section~\ref{sec:SRFRwE}), which can be seen as fractional generalization of IRS \cite{guerrini_polynomial_2019,glz_enhancing2020}. We conjecture that we can decode almost all codeword $(\v/d)$ and almost all errors $\e$ of this fractional code (Conjecture~\ref{conj:fge-gen}). In this paper we present a result which is a step towards this conjecture. We prove that uniqueness is guaranteed for a generic instance $\u$ of SRFR, (Theorem~\ref{thm:genUnicity}). Our result is valid not only given evaluations, but also in the general context of any moduli $a$.

Our approach to prove Theorem~\ref{thm:genUnicity} is to study the degrees of a relation module. Solutions of SRFR are related to generators of a row reduced basis of this $\Kx$-module which have a 
negative shifted-row degree. Shifts are necessary to integrate degree constraints. We show that for generic instances, there is only one generator with negative row degree, hence uniqueness of the SRFR solution.

Previous works studied generic degrees of different but related modules: \emph{e.g.}
for the module of generating polynomials of a scalar matrix sequence \cite{Vil97}, 
for the kernel module of a polynomial matrix and specific matrix dimensions \cite{jeannerod_essentially_2005}. Both cases does not consider any shift. The generic degrees also appear in dimensions of blocks in a shifted Hessenberg form. However, the link with the degree of a module is unclear and no
shift is discussed (shifted Hessenberg is not related to our shift) \cite{PS07}.
We prove our result for any shift and any matrix dimension by adapting some of their techniques, and by proving that they apply to the specific relation modules related to SRFR.
 
 In Section~\ref{sec:mot} we introduce the motivations of our work, started from the classic SRFR to the extended version with errors. We also show their respective applications in polynomial linear system solving and in error correcting algorithms.
 In Section~\ref{sec:prel}, we
 define the algebraic tools that we will use to prove our technical results of the Section~\ref{sec:relModule}.
In Section~\ref{sec:unique} we explain how these results are linked to the uniqueness of the solution of the SRFR and we finally prove the Theorem~\ref{thm:genUnicity} about the generic uniqueness.

\section{Motivations}\label{sec:mot}
\subsection{Rational Function Reconstruction}\label{sec: SRFR}
In this section we recall standard definitions and we state our problem, starting from rational function reconstruction and its application to linear algebra.
Let $\K$ be a field, $a, u\in \Kx$ with $\deg(u)<\deg(a)$. 
The \emph{Rational Function Reconstruction} (shortly RFR) is the problem of reconstructing a rational function $v/d \in \K(x)$ such that
\begin{equation}{\label{eq:RFR}}
\gcd(d,a)=1, \frac{v}{d}\equiv u \bmod a, \deg(v)<N, \deg(d)<D.
\end{equation}
We focus on the weaker equation:
\begin{equation}{\label{eq:WRFR}}
v \equiv du \bmod a, \deg(v)<N, \deg(d)<D.
\end{equation}
The RFR problem generalizes many problems including the \emph{Pad\'e approximation} if $a=x^f$ and the \emph{Cauchy interpolation} if $a=\prod_{i=1}^f(x-\alpha_i)$, where the $\alpha_i$ are pairwise distinct elements of the field $\K$.
The homogeneous linear system related to the Equation~\eqref{eq:WRFR} has
$\deg(a)$ equations and $N+D$ unknowns.
If $\deg(a) = N+D-1$, the dimension of the solution space of Eq.~\eqref{eq:WRFR} is
at least $1$ and it always admits a non-trivial solution. Moreover, such a
solution is unique in the sense that all solutions are polynomial multiples of a
unique one, $(v_{\min},d_{\min})$ (see \emph{e.g.\ }\cite[Theorem
5.16]{vzgathen_gerhard_2013}).  On the other hand, Equation~\eqref{eq:RFR} does
not always have a solution, but when a solution exists, it is unique. Indeed, it
is $v_{\min}/d_{\min}$ and we can reconstruct it by the \textit{Extended
  Euclidean Algorithm} (EEA). Throughout this paper, we will focus on
Equation~\eqref{eq:WRFR}.

The RFR can be naturally extended to the vector case as follows.
Let $a_1,\ldots, a_n\in \Kx$ with degrees $f_i=\deg(a_i)$ and $\u=(u_1,\ldots,u_n)\in \Kx^n$ where $\deg(u_i)<f_i$. Let $0 < N_i,D_i<f_i$. The \textit{Vector Rational Function Reconstruction} (VRFR) is the problem of reconstructing $(v_i, d_i)$ for $1\leq i \leq n$ such that
$
v_i \equiv d_iu_i \bmod a_i, \deg(v_i)<N_i, \deg(d_i)<D_i.
$
We can apply the RFR componentwise and so, if $f_i=N_i+D_i-1$, we can uniquely reconstruct the solution.
\begin{definition}{(SRFR)}
 Given $\u=(u_1,\ldots,u_n)\in \Kx^n$ where $\deg(u_i)<f_i$, and degree bounds $0< N_i<f_i$ and $0< D<\max_{1\leq i \leq n}\{f_i\}$, we want to reconstruct the tuple $(\v,d)=(v_1,\ldots,v_n,d)$ such that
\begin{equation}{\label{eq:SRFR}}
  v_i \equiv du_i \bmod a_i, \deg(v_i)<N_i, \deg(d)<D.  
\end{equation}
We denote $\cS_{\u}$ the set of solutions.
\end{definition}
The SRFR is then the problem of reconstructing a vector of rational functions with the same denominator. 
Therefore, if $f_i = N_i+D-1$ for $1\leq i \leq n$, we can uniquely reconstruct the solution. In this case, the common denominator property allows to reduce the number of unknowns, with an impact on the degree of the $a_i$'s. In detail, the number of equations of \eqref{eq:SRFR} is $\sum_{i=1}^nf_i$, while the number of the unknowns, \emph{i.e.\ }the coefficients of $\v$ and $d$, is $\sum_{i=1}^nN_i+D$. 
If 
\begin{equation}{\label{eq:existenceNumbPoints}}
\sum_{i=1}^nf_i=\sum_{i=1}^nN_i+D-1
\end{equation}
then Equation~\eqref{eq:SRFR} always admits a non-trivial solution. However, the uniqueness is not anymore guarantied.

\begin{cexample}
  \label{ex:counter}
  Let $\K = \mathbb{F}_{11}$, $n=2$, $N_1=N_2=2$, $D=3$ and
  $a_1=a_2=\prod_{i=1}^3(x-2^i)=x^3 + 8x^2 + x + 2$. Let $\u=\v/d$ with
  $\v=(2x + 6, 8x + 2)$ and $d=2x^2 + 2x + 2$ invertible modulo $a_i$. Then the
  SRFR with instance $\u$ has two $\Kx$-linearly independent solutions
  $(d,\v)=(4x^2 + 9x + 10, 0, 0)$ and $(d',\v')=(8x + 3, 9x + 5, 3x + 9)$.
\end{cexample}


Uniqueness is a central property for the applications of SRFR: unique decoding algorithms are essential in error correcting codes, and it is also a necessary condition to use evaluation interpolation techniques in computer algebra. 
The study of the bound on the number of equations which guaranties the uniqueness of SRFR has also repercussion on the complexity. Indeed, the complexity of decoding algorithms or evaluation interpolation techniques depends on this number of equations. So decreasing this number has a direct impact on the complexity.


We denote by $s$ the rank of the $\Kx$-module spanned by the solutions $\cS_{\u}$. Therefore, all solutions can be written as a linear combination $\sum_{i=1}^s  c_i p_i$ of $s$ polynomials $p_i$ with polynomial coefficients $c_i$.
The case $s=1$ corresponds to what we call uniqueness of the solution.
In \cite{olesh_vector_2007}, the authors studied the particular case where $a_1=\ldots=a_n=a$ and $N_1=\ldots=N_n=N$.
They proved the following,
\begin{theorem}{\cite[Theorem 4.2]{olesh_vector_2007}}
Let $k$ be minimal such that $\deg(a)\geq N+(D-1)/k$, then the rank $s$ of the solution space $\cS_{\u}$ satisfies
$s\leq k$.
\end{theorem}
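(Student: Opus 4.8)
The plan is to recast the problem via the relation module alluded to in the introduction and then to read $s$ off the shifted row degrees of a reduced basis. Write $f=\deg(a)$ and let $\cM=\{(d,v_1,\dots,v_n)\in\Kx^{n+1}: v_i\equiv d\,u_i \bmod a,\ 1\le i\le n\}$. One checks that $\cM$ is free of rank $n+1$ (it contains $a\Kx^{n+1}$), that the $n+1$ rows $(1,u_1,\dots,u_n)$ together with the rows having a single entry $a$ in the $v_i$-slot ($1\le i\le n$) form a $\Kx$-basis of $\cM$ whose matrix is triangular with determinant $a^n$, and that $\Kx^{n+1}/\cM\cong(\Kx/(a))^n$ has $\K$-dimension $nf$. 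Introducing the shift $\s=(-D,-N,\dots,-N)$, so that $\rdeg_\s(\p)=\max_j(\deg p_j+s_j)$, the degree constraints of \eqref{eq:SRFR} say exactly that $\cS_{\u}=\{\p\in\cM:\rdeg_\s(\p)<0\}$.

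Next I would fix a $\s$-reduced basis $\p_1,\dots,\p_{n+1}$ of $\cM$ with nondecreasing $\s$-row degrees $\delta_1\le\cdots\le\delta_{n+1}$. By the predictable-degree property, $\rdeg_\s\bigl(\sum_i c_i\p_i\bigr)=\max_{c_i\ne0}(\deg c_i+\delta_i)$; hence an element of $\cM$ lies in $\cS_{\u}$ only if $c_i=0$ whenever $\delta_i\ge0$, while each $\p_i$ with $\delta_i<0$ already lies in $\cS_{\u}$. Therefore the $\Kx$-module spanned by $\cS_{\u}$ equals $\bigoplus_{\delta_i<0}\Kx\,\p_i$, so $s=\#\{i:\delta_i<0\}=:m$, and everything reduces to bounding $m$.

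I would then extract two facts about $(\delta_i)_i$. First, for a square $\s$-reduced basis one has $\sum_i\delta_i=\deg\det(P)+\sum_j s_j$, where $P\in\Kx^{(n+1)\times(n+1)}$ has the $\p_i$ as rows; since $\deg\det(P)=\dim_\K(\Kx^{n+1}/\cM)=nf$ and $\sum_j s_j=-D-nN$, this gives $\sum_i\delta_i=n(f-N)-D$. Second — and this is the decisive point — every $\s$-row degree of the explicit basis above is $\le f-N$ (here one uses $\deg u_i<f$), and the sorted $\s$-row degrees of any $\s$-reduced basis of $\cM$ are, componentwise, $\le$ those of any $\Kx$-basis of $\cM$ (a consequence of the predictable-degree property, seen by computing $\dim_\K\{\p\in\cM:\rdeg_\s(\p)\le t\}$ from each basis); hence $\delta_i\le f-N$ for every $i$. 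It is essential to use that particular basis and not the rows of $aI_{n+1}$, whose $-D$-shifted slot would only yield the too-weak bound $\delta_{n+1}\le f-D$.

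Finally I would combine these. If $m\le n$, then $n(f-N)-D=\sum_i\delta_i\le -m+(n+1-m)(f-N)$, which rearranges to $m(f-N+1)\le D+(f-N)$, i.e. $m\le 1+\lfloor(D-1)/(f-N+1)\rfloor$; if $m=n+1$, then all $\delta_i<0$ forces $\sum_i\delta_i\le-(n+1)$, i.e. $n(f-N+1)\le D-1$. In either case an elementary estimate with floors and ceilings — using $f-N\ge1$ — gives $m\le\lceil(D-1)/(f-N)\rceil=k$, which is the claim. The step I expect to be the real obstacle is the second fact above: the determinant identity alone is far too weak, since Counterexample~\ref{ex:counter} already exhibits two negative $\delta_i$ with $\sum_i\delta_i=-1$, so the whole argument stands or falls on producing a uniform upper bound on the individual $\delta_i$, which is precisely where the generating family for $\cM$ must be chosen carefully.
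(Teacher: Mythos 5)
Your proposal is correct, and it is worth noting that the paper itself gives no proof of this statement (it is quoted from [OS07, Theorem 4.2]); what you have written is a self-contained argument in exactly the spirit of the machinery the paper develops later. Your module $\{(d,\v): \v\equiv d\u \bmod a\}$ with shift $(-D,-N,\dots,-N)$ is precisely the relation module $\cA_{\cM,R_{\u}}$ of Section 5 (up to the ordering of coordinates), your identification $s=\#\{i:\delta_i<0\}$ is the paper's Eq.~\eqref{eq:dimSu} read at the level of ranks rather than $\K$-dimensions, and your two facts are (a) the determinantal degree-sum identity and (b) a minimality property of reduced bases. The combination of (a) and (b) with the integer arithmetic does yield $s\le k$: I checked that $m(f-N+1)\le D+(f-N)$ gives $m\le 1+\lfloor(D-1)/(f-N+1)\rfloor\le\lceil(D-1)/(f-N)\rceil$ for $D\ge 2$, and gives $m\le 1=k$ directly when $D=1$ (where $k=1$ by the convention $k\ge 1$); your case split at $m=n+1$ is subsumed by the same inequality. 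One soft spot: the parenthetical justification of fact (b) by comparing $\dim_\K\{\p\in\cM:\rdeg_{\s}(\p)\le t\}$ across bases only yields an inequality between the two dimension counts, and pointwise domination of these counts does not by itself imply componentwise domination of the sorted degrees. Fortunately you only need the weakest instance, namely $\max_i\delta_i\le f-N$, and that follows in one line from the predictable degree property (Proposition~\ref{prop:predictable}): if $\delta_{n+1}>f-N$, then every row of your explicit triangular basis, having $\s$-row degree at most $f-N$, would be a $\Kx$-combination of $\p_1,\dots,\p_n$ only, contradicting that these rows span a rank-$(n+1)$ module. With that repair the proof is complete. Your closing diagnosis is also on target: the counterexample of the paper has $\sum_i\delta_i=-1$ with two negative $\delta_i$, so the determinant identity alone cannot bound $s$, and the uniform bound $\delta_i\le f-N$ coming from the carefully chosen generating family $(1,u_1,\dots,u_n),\ a\eps'_1,\dots,a\eps'_n$ is the essential ingredient.
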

Note that if $k=1$, the solution is always unique ($s=1$). This matches the uniqueness condition on the $\deg(a)$ of VRFR. On the other hand, if $k=n$ and $\deg(a)\geq N+(D-1)/n$ then $s\leq n$ which is always true. Hence in this case the theorem does not provide any new information about the solution space. 
This theorem represents a connection between the classic bound on the $\deg(a)= N+D-1$ which guaranties the uniqueness and the \emph{ideal} one, \emph{i.e.\ }$\deg(a)=N+(D-1)/n$ (see Equation~\eqref{eq:existenceNumbPoints}), which exploits the common denominator property.
They also proposed an algorithm that computes a complete basis of the solution space using $\mathcal{O}(nk^{\omega-1}B(\deg(a)))$ operations in $\K$ 
where 
 $2\leq \omega \leq 3$ is the exponent of the matrix multiplication and
$B(t):=M(t)\log t$ where $M$ is the classic polynomial multiplication arithmetic complexity (see ~\cite{vzgathen_gerhard_2013} for instance).
In \cite{nielsen_algorithms_2016} the complexity was improved. In particular, they introduced an algorithm that computes the solution space (in the general case of different moduli, \emph{i.e.\ }$a_1, \ldots, a_n$) with complexity $\mathcal{O}(n^{\omega-1}B(f)\log(f/n)^2)$ where $f=\max_{1\leq i \leq n}\{\deg(a_i)\}$.

We now came back to general case of the SRFR. The main result of this work is to prove that when the degree constraints guarantee the existence of the solution, then for almost all $\u$ we also get the uniqueness (see Theorem~\ref{thm:genUnicity}). 
\begin{theorem}{\label{thm:introourresult}}
  If Equation~\eqref{eq:existenceNumbPoints} is satisfied, then for almost all
  instances $\u$ the SRFR admits a unique solution, \emph{i.e.\ }it has rank $s=1$.
\end{theorem}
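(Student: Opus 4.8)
The approach is to reformulate uniqueness in terms of a relation $\Kx$-module and to show that its shifted row-degree profile is generically as balanced as the existence condition allows. First I would introduce the $\Kx$-module $\cM_\u=\{(\v,d)\in\Kx^{n+1}:v_i\equiv d\,u_i\bmod a_i,\ 1\le i\le n\}$ of solutions of \eqref{eq:SRFR} \emph{without} the degree bounds. It is free of rank $n+1$: it is spanned by the rows of the $(n+1)\times(n+1)$ matrix $G_\u$ with $a_1,\dots,a_n$ on the first $n$ diagonal entries, last column $(0,\dots,0,1)^{\mathsf T}$ and last row $(u_1,\dots,u_n,1)$, and $\det G_\u=\prod_i a_i\ne0$, so these rows form a basis. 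Encoding the degree bounds by the shift $\s=(-N_1,\dots,-N_n,-D)$, an element of $\cM_\u$ lies in $\cS_\u$ precisely when its $\s$-row degree is negative. Taking an $\s$-row reduced basis of $\cM_\u$ with $\s$-row degrees $\delta_1\le\cdots\le\delta_{n+1}$, the predictable-degree property shows that each basis vector of negative $\s$-row degree is a solution and that every solution is a $\Kx$-combination of these; hence the $\Kx$-module spanned by $\cS_\u$ is free of rank $s=\#\{i:\delta_i<0\}$, and uniqueness means $\delta_2\ge0$.

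Next I would pin down $\sum_i\delta_i$. For any $\s$-row reduced basis matrix $M$ of $\cM_\u$ one has $\sum_i\delta_i=\deg\det M+\sum_j s_j$ (reduce to nonnegative shifts by a global translation, where this is the standard determinant-degree identity), and $\det M=\prod_i a_i$ up to a nonzero scalar; by the existence hypothesis \eqref{eq:existenceNumbPoints} this gives $\sum_i\delta_i=\sum_i f_i-\sum_i N_i-D=-1$. Thus the $\delta_i$ are $n+1$ integers summing to $-1$, and writing $P_k$ for the sum of the $k$ largest of them, the sequence $P_0=0,P_1,\dots,P_{n+1}=-1$ is concave, so $P_k\ge-k/(n+1)>-1$, hence $P_k\ge0$ for $k\le n$. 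This identifies $(-1,0,\dots,0)$ as the unique balanced profile, and shows that the homogeneous linear system behind \eqref{eq:SRFR} — a $\K$-matrix $A_\u$ of size $(\sum_i f_i)\times(\sum_i f_i+1)$ whose entries are $\K$-linear in the coefficients of $\u$ — always has nontrivial kernel and has full row rank exactly when the $\K$-solution space of \eqref{eq:SRFR} is one-dimensional, i.e.\ exactly when $\bdelta(\u)=(-1,0,\dots,0)$; in particular full row rank of $A_\u$ forces $s(\u)=1$.

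It now suffices to show that $A_\u$ has full row rank for a generic $\u$. Full row rank is the non-vanishing of some maximal minor of $A_\u$, a polynomial in the coefficients of $\u$; since the space of instances $\prod_i\{u_i:\deg u_i<f_i\}$ is an affine space, hence irreducible, it is enough to exhibit one instance $\u_0$ with $\rank A_{\u_0}=\sum_i f_i$ — equivalently, with the $\s$-row degrees of $\cM_{\u_0}$ equal to $(-1,0,\dots,0)$. This construction is the crux, and the place where the genericity techniques of \cite{Vil97,jeannerod_essentially_2005,PS07} must be adapted to the relation modules $\cM_\u$: general position does not suffice, since Counterexample~\ref{ex:counter} exhibits structured instances that fail, and unlike in those works the shift $\s$ and the special shape of $G_\u$ (diagonal moduli plus one dense row, with reductions modulo the $a_i$ to control) must be dealt with. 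I would attack it either by choosing the $u_{0,i}$ as suitable monomials (or small perturbations thereof) so that $A_{\u_0}$ becomes triangular enough for its rank to be read off after the reductions, or, more robustly, by isolating the leading term of the minor polynomial with respect to a well-chosen coefficient of $\u$ and identifying it with a determinant that is visibly nonzero. Once such a $\u_0$ is produced, the maximal minor is a nonzero polynomial, and the complement of its zero locus is a dense Zariski-open set of instances, each with $s(\u)=1$ — which is the asserted generic uniqueness. Note that the argument uses only the degrees of the $a_i$, so it applies to arbitrary moduli.
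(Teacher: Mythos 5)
Your overall strategy coincides with the paper's: encode the degree constraints as the shift $\s=(-N_1,\dots,-N_n,-D)$, identify $\cS_\u$ with the negative-$\s$-degree part of the relation module, observe that the $\s$-row degrees of a reduced basis sum to $\sum_i f_i+\sum_j s_j=-1$, and reduce generic uniqueness to the non-vanishing of a maximal minor of the $(\sum_i f_i)\times(\sum_i f_i+1)$ coefficient matrix, hence to exhibiting a single instance $\u_0$ on which that minor is nonzero. Up to that point your argument is sound (one cosmetic remark: the concavity bound $P_k\ge 0$ holds for \emph{every} profile summing to $-1$, e.g.\ $(2,-1,-1,-1)$, so it does not by itself ``identify'' $(0,\dots,0,-1)$; what you actually use, correctly, is the equivalence between full row rank, $\dim_\K S_\u=1$, and $s=1$, which needs no concavity).

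The genuine gap is exactly the step you label ``the crux'' and then only sketch as two possible lines of attack: producing one instance $\u_0$ for which the minor is nonzero, i.e.\ for which the $\Sigma=\sum_i f_i$ monomial images $\{x^i\eps_j R_{\u_0}\}$ indexed by $\cF_{(N_1,\dots,N_n,D-1)}$ are linearly independent in $\Kx^n/\cM$. This is where the paper spends its technical effort, and it cannot be waved away by genericity over all matrices $M\in\Kx^{(n+1)\times n}$: the SRFR instances form a proper affine subvariety (identity block on top, $-\u$ in the last row), on which a polynomial nonzero on the ambient space may well vanish identically. The paper resolves this by the explicit block-Krylov construction of Theorem~\ref{thm:clement}: starting from $\u_i=\eps_i$, the condition $d_i=N_i\le f_i$ for $i\le n$ forces the first $n$ blocks to stay equal to $K(\eps_i,N_i)$ (hence the identity block survives), and only the last block, of width $D-1$, is assembled from leftover columns of the $K_j$; one must then re-prove the key inequality $s'_j\ge 0$ without the non-increasing hypothesis on $\d$, which is done using $D-1<\min_i f_i$. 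Your proposal contains no substitute for this construction, so as written it establishes only the (standard) reduction of the theorem to the existence of a witness, not the theorem itself.
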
{}
We will both use the expressions ``almost all'' or ``generic'', meaning that there exists a polynomial $R$ such that a certain property  is true for all instances that do not cancel $R$. In our case, we state that there exists a polynomial $R$ such that the SRFR admits a unique solution for all instances $\u$ such that $R(\u)\neq 0$.

The SRFR problem has a natural application in a linear algebra context. 
\paragraph{Application to polynomial linear system solving}
Suppose that we want to compute the solution of a full rank polynomial linear system, $\y(x)=A^{-1}\b\in \K(x)$ where $A\in \Kx^{n \times n}$ and $\b \in \Kx^{n \times 1}$, from its image modulo a polynomial $a(x)$. We will refer to this problem as \emph{polynomial linear system solving} (shortly PLS).
We remark that, by the Cramer's rule, $\y$ is vector of rational functions with the same denominator:
PLS is then a special case of SRFR.
In \cite{olesh_vector_2007}, the authors proved that 
the solution space is uniquely generated ($s=1$)
when $\deg(a)\geq N+(D-1)/n$ in the special case of $D=N=n\deg(A)$ and $\deg(A)=\deg(b)$. They exploited another bound on the degree of $a$ based on \cite{cabay_exact_1971}.


In view of Theorem~\ref{thm:introourresult} and as our experiments suggest, we
could hope for the following,
\begin{conjecture}\label{conj:fg-gen}
  If Equation~\eqref{eq:existenceNumbPoints} is satisfied then for almost all $(\v,d)$ with $\gcd(d,a_i)=1$, the SRFR with $\u=\frac{\v}{d}$ as input admits a unique solution.
\end{conjecture}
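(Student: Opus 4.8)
The plan is to deduce the conjecture from Theorem~\ref{thm:introourresult} by transporting its genericity statement along the rational parametrization $(\v,d)\mapsto\u=\v/d\bmod\a$. Theorem~\ref{thm:introourresult} provides a nonzero polynomial $R$ in the coefficients of an instance $\u$ such that $R(\u)\neq 0$ forces $s=1$. Let $U$ be the irreducible dense open subset of the affine space of dimension $\sum_iN_i+D$ consisting of the pairs $(\v,d)$ with $\deg v_i<N_i$, $\deg d<D$ and $\gcd(d,a_i)=1$ for all $i$, and let $\phi$ be the map $(\v,d)\mapsto(v_1d^{-1}\bmod a_1,\dots,v_nd^{-1}\bmod a_n)$ viewed on $U$. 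By Cramer's rule the coordinates of $d^{-1}\bmod a_i$ are rational functions of $d$ whose denominator is a polynomial $\delta_i(d)$ vanishing exactly where $\gcd(d,a_i)\neq 1$; hence $\phi$ is a morphism on $U$ and $R\circ\phi=\tilde R/\bigl(\prod_i\delta_i(d)\bigr)^{m}$ for some polynomial $\tilde R$ and some integer $m\geq 0$. Since $\prod_i\delta_i$ is nonzero on $U$, any $(\v,d)$ in the open set $\{\tilde R\neq 0\}\cap U$ satisfies $R(\phi(\v,d))\neq 0$, so the SRFR with input $\v/d$ has $s=1$; and because $U$ is irreducible, $\{\tilde R\neq 0\}\cap U$ is dense in $U$ as soon as $\tilde R\not\equiv 0$. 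Thus the whole conjecture reduces to proving $\tilde R\not\equiv 0$, for which it suffices that $\phi$ be dominant: a dominant morphism has dense image, which must then meet the dense open locus $\{R\neq 0\}$.

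To obtain dominance I would reduce it to a single fibre count. Every fibre of $\phi$ contains the $\K^\ast$-scaling orbit of each of its points, so all fibres have dimension $\geq 1$; by upper semicontinuity of the fibre dimension on the source, $\phi$ satisfies $\dim\overline{\phi(U)}=\dim U-1=\sum_iN_i+D-1=\sum_if_i$ — that is, $\phi$ is dominant onto the space of instances — provided there is one point of $U$ whose fibre is $1$-dimensional. The fibre over $\phi(\v^\ast,d^\ast)$ is exactly $\cS_{\u^\ast}\cap U$ with $\u^\ast=\v^\ast/d^\ast$, so I need a single instance coming from a genuine rational function whose solution set is one line, i.e.\ $\dim_{\K}\cS_{\u^\ast}=1$. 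Here it is convenient that, after eliminating the $v_i$ through $v_i=d'u_i^\ast\bmod a_i$, the space $\cS_{\u^\ast}$ becomes the kernel of a linear map $\K^{D}\to\K^{D-1}$ — the target having dimension $\sum_i(f_i-N_i)=D-1$ exactly because Equation~\eqref{eq:existenceNumbPoints} holds — which always has $d^\ast$ in its kernel; so $\dim_{\K}\cS_{\u^\ast}=1$ iff this map has full rank $D-1$.

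A natural seed is $d^\ast=1$, giving the polynomial instance $\u^\ast=\v^\ast$ with $\deg v_i^\ast<N_i$ and $(\v^\ast,1)\in U$; the linear map above then becomes a $(D-1)\times D$ matrix $M(\v^\ast)$ whose entries are linear in $\v^\ast$ and whose kernel always contains the constant vector. The only thing left is to prove that the complementary $(D-1)\times(D-1)$ minor of $M(\v^\ast)$ is a nonzero polynomial in $\v^\ast$ — equivalently, that SRFR with a generic \emph{polynomial} input $\v^\ast$ has a one-dimensional solution space for at least one choice of $\v^\ast$. I expect this to be accessible by rerunning the relation-module analysis of Section~\ref{sec:relModule}, now for the family of instances $\u=\v^\ast$: specialize the relation module and show that its generic shifted row degrees still leave exactly one generator of negative row degree. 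This is precisely where I expect the main obstacle, and presumably why the statement is only conjectured: the locus of rational (equivalently, polynomial) instances is a proper subvariety of the space of all instances, so one must check that the generic-degree computations of Section~\ref{sec:relModule} do not degenerate when restricted to it — in the hardest cases by producing an explicit $\v^\ast$ tailored to the shape of the $a_i$ (say monomial or split moduli) and then specializing to an arbitrary admissible triple $(a_i,N_i,D)$.
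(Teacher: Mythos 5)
First, note that the statement you are proving is labelled a \emph{conjecture} in the paper: the authors do not prove it, and the only thing they offer is the one-sentence remark immediately after it, namely that in view of Theorem~\ref{thm:introourresult} it would suffice to exhibit a single instance $\u$ of the form $\v/d$ on which SRFR has a unique solution. Your proposal is, in substance, a careful formalization of exactly that reduction, and the formalization is sound. Indeed your dominance argument is slightly sharper than the paper's remark: since the genericity polynomial $R$ of Theorem~\ref{thm:introourresult} is only a \emph{sufficient} certificate of uniqueness, knowing that some $\u^\ast=\v^\ast/d^\ast$ has $\dim_{\K}\cS_{\u^\ast}=1$ does not directly give $R(\u^\ast)\neq 0$; your detour through the fibre-dimension theorem (one $1$-dimensional fibre $\Rightarrow$ $\dim\overline{\phi(U)}=\sum_i f_i$ $\Rightarrow$ $\phi$ dominant $\Rightarrow$ the image meets $\{R\neq 0\}$) correctly bridges that gap, and the count $\sum_i(f_i-N_i)=D-1$ from Equation~\eqref{eq:existenceNumbPoints} is right. (Two minor caveats: the Zariski-density language should be read over $\overline{\K}$ and then translated back into the paper's ``nonzero polynomial $R$'' convention; and $m$ in $R\circ\phi=\tilde R/(\prod_i\delta_i)^m$ should be allowed to depend on the monomials of $R$, which changes nothing.)

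The genuine gap is that the decisive step is never carried out: you must produce one explicit $(\v^\ast,d^\ast)$ --- equivalently, show that the $(D-1)\times(D-1)$ minor of your matrix $M(\v^\ast)$ is not the zero polynomial --- and you explicitly defer this to a future ``rerun of the relation-module analysis.'' That step is not a technical loose end; it is the entire content of Conjecture~\ref{conj:fg-gen}. The difficulty is precisely the one you identify: the instances of the form $\v/d$ form a proper subvariety of the instance space, Corollary~\ref{cor:geneqrdeg} only guarantees that the genericity polynomial is nonzero \emph{somewhere} on the full space of matrices $R_{\u}$, and Counterexample~\ref{ex:counter} shows that uniqueness can genuinely fail on instances of the form $\v/d$, so the restriction to this subvariety could a priori degenerate. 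Your seed $d^\ast=1$ is a reasonable candidate but is itself unproven (whether SRFR with generic polynomial input $\v^\ast$ and $D>1$ has a one-dimensional solution space is, as far as the paper shows, open). So the proposal should be assessed as a correct and somewhat more rigorous restatement of the paper's own reduction, not as a proof: the conjecture remains open at exactly the point where you stop.
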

Since we have proved the uniqueness for generic instances $\u$, it would be sufficient to show the existence of an instance $\u$ of the form $\v/d$ to prove the conjecture.



\subsection{Reconstruction with Errors}
\label{sec:SRFRwE}
In this section we introduce the problem of the Simultaneous Rational Function with Errors (\cite{boyer_numerical_2014,kaltofen_early_2017,guerrini_polynomial_2019,pernet_high_2014,glz_enhancing2020}), \emph{i.e.\ }the SRFR in a scenario where errors may occur in some evaluations.
Throughout this section we suppose that $\K$ is a finite field of cardinality $q$, we fix $\balpha = \{\alpha_1,\ldots,\alpha_f\}$ pairwise distinct evaluation points in $\K$ and we consider the polynomial $a=\prod_{i=1}^f(x-\alpha_i)$.
\begin{definition}{(SRFR with Errors)}
  Fix $0< N,D,\varepsilon<f\leq q$.  An instance of the SRFR with errors
  (SRFRwE) is a matrix $\bomega\in \K^{n \times f}$ whose columns are
  $\bomega_j=\v(\alpha_j)/d(\alpha_j)+\e_{j}$ for some reduced
  $\v/d\in \K(x)^{n \times 1}$ and some error matrix $\e$. The reduced vector must
  satisfy $\deg(\v)<N$, $\deg(d)<D$ and $d(\alpha_i)\neq 0$. The error matrix
  must have its \emph{error support}
  $E:=\{1\leq j\leq f \mid \e_{j}\neq \boldsymbol{0}\}$ which satisfies
  $|E|\leq \varepsilon$.

  The solution of the SRFRwE instance $\bomega$ is $(\v,d)$.
\end{definition}{}
\paragraph{SRFRwE as Reed-Solomon code decoding}
We observe that if $n=1$ and $D=1$, $\v/d$ is a polynomial. Then the
SRFRwE is the problem of recovering a polynomial $v$ given evaluations, some of
which possibly erroneous. So in this case, SRFRwE is the problem of decoding an
instance of a \emph{Reed-Solomon code}.

Its vector generalization, that is $n>1$ and $D=1$, coincides with the decoding
of an \emph{homogeneous Interleaved Reed-Solomon (IRS) code}. Indeed, an IRS
codeword can be seen as the evaluation of a vector of polynomials $\v$ on
$\balpha$. Thus decoding IRS codes is the problem of recovering $\v$ from
$\bomega_j=\v(\alpha_j)+\boldsymbol{e}_j$.


Let us now detail how we can solve SRFRwE using SRFR. We use the same technique
of decoding RS and IRS codes
\cite{elwyn_r._berlekamp_error_1986,bleichenbacher_decoding_2003,puchinger_decoding_2017}. We
introduce the \emph{Error Locator Polynomial} $\Lambda=\prod_{j\in
  E}(x-\alpha_j)$. Its roots are the erroneous evaluations so
$\deg(\Lambda)=|E|\leq
\varepsilon$.  We consider the \emph{Lagrangian polynomials} $u_i\in
\K[x]$ such that $u_i(\alpha_j)=\omega_{ij}$ for any $1\leq i \leq
n$. The classic approach is to remark that $(\bvarphi,\psi)= (\Lambda(x)\v(x),
\Lambda(x)d(x))$ is a solution of
\begin{equation}{\label{eq:keyEqAsRFR}}
  \bvarphi=\psi\u \bmod \prod_{i=1}^f(x-\alpha_i).
\end{equation}
In order to reconstruct
$(\v,d)$ it suffices to study the set of
$(\boldsymbol{\varphi},\psi)$ which verify Equation~\eqref{eq:keyEqAsRFR} and
such that $\deg(\boldsymbol{\varphi})< N+\varepsilon$ and $\deg(\psi)<
D+\varepsilon$.  In this way we reduce SRFRwE to SRFR (see Eq.~\ref{eq:SRFR}).
Hence, if
$f=(N+\varepsilon)+(D+\varepsilon)-1=N+D+2\varepsilon-1$ we can uniquely
reconstruct every component of the vector
(\emph{cf.} \cite{boyer_numerical_2014,kaltofen_early_2017}).

It is possible to reduce the number of evaluations w.r.t. the maximal number of
errors $\varepsilon$ in the setting of IRS decoding ($D=1$).

\begin{theorem}[{\cite{bleichenbacher_decoding_2003,brown_probabilistic_2004,schmidt_collaborative_2009}}]
  \label{thm:kiayas}
  Fix $0< N,\varepsilon<f\leq q$ and $E$ such that $|E|\leq \varepsilon$. If
  $f=N-1+\varepsilon+\varepsilon/n$, then for all $(\v,1)$ and almost
  all error matrices $\e$ of support $E$, the SRFRwE admits a unique solution on
  the instance $\bomega$ where $\bomega_j=\v(\alpha_j)/d(\alpha_j)+\e_j$.  
\end{theorem}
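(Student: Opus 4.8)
The plan is to read this statement as an instance of interleaved Reed--Solomon decoding and to reduce non‑uniqueness to the vanishing of one explicit polynomial in the error entries. Since $D=1$, the hidden denominator $d$ is a nonzero constant which we normalise to $1$, so the instance is simply $\bomega=\v(\balpha)+\e$ with $\deg v_i<N$ and $\e$ supported on $E$. A \emph{second} solution of the SRFRwE instance is precisely a codeword $\v'\neq\v$ with $\deg v'_i<N$ lying at Hamming distance at most $\varepsilon$ from $\bomega$; denoting by $E'=\{\,j:\v'(\alpha_j)\neq\bomega_j\,\}$ its error support, we have $|E'|\le\varepsilon$. Fixing $\v$ and $E$, the objective is to construct a nonzero polynomial $R$ in the $n|E|$ free entries of $\e$ such that $R(\e)\neq 0$ forbids every such $\v'$; ``almost all'' then means outside $\{R=0\}$.

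First I would make explicit, for each admissible support $E'$, the constraint a hypothetical second decoding imposes. Set $\Delta=\v-\v'$, a nonzero vector of polynomials of degree $<N$. On $\balpha\setminus(E\cup E')$ the words $\v$, $\v'$ and $\bomega$ all coincide, so $\Delta$ vanishes there, whence $\Delta_i=\rho_i\cdot\prod_{j\notin E\cup E'}(x-\alpha_j)$ with $\deg\rho_i\le N-1-(f-|E\cup E'|)$; on $E\setminus E'$ one reads off $\Delta_i(\alpha_j)=-e_{ij}$, which turns the recovery of $\rho_i$ into an interpolation problem with $|E\setminus E'|$ nodes but only $N-(f-|E\cup E'|)$ unknown coefficients. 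The hypothesis $f=N-1+\varepsilon+\varepsilon/n$ is calibrated so that, considered jointly over the $n$ components (which must share the \emph{same} support $E'$), this system is over‑determined, hence its solvability imposes a nonzero set of linear conditions on $\e$; the degenerate situations, such as $E\subseteq E'$ where the degree bound alone forces $\Delta=0$, are checked separately. Thus, for each fixed $E'$, the set of errors admitting a second decoding of support $E'$ is contained in a proper Zariski‑closed subset of the error space, and multiplying the corresponding polynomials over the finitely many subsets $E'\subseteq\{1,\dots,f\}$ with $|E'|\le\varepsilon$ yields the desired $R$.

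The remaining, and genuinely delicate, step is to certify that $R\not\equiv 0$: equivalently, to exhibit one error matrix supported on $E$ for which no second decoding exists at all. I would establish this by a rank argument: the obstruction attached to a given $E'$ can be written as the product of a Vandermonde‑type matrix in the $\alpha_j$ and a block built from the error entries $e_{ij}$; since the Vandermonde factor has full rank (the $\alpha_j$ being pairwise distinct), choosing the error block generically makes that product, hence the obstruction, of maximal rank, and this can be arranged simultaneously for the finitely many $E'$. Phrased within the apparatus of this paper, the reduction sends the instance to an SRFR with $N_i=N+\varepsilon$, $D=1+\varepsilon$ and all moduli equal to $a$, whose solution module always contains $(\Lambda\v,\Lambda)$ for $\Lambda=\prod_{j\in E}(x-\alpha_j)$; one must then show that, for generic $\e$, none of the other low‑degree solutions has the form $(\psi'\v',\psi')$ with $\deg v'_i<N$. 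This is the same flavour as Theorem~\ref{thm:genUnicity}, the essential difference being that here it must hold over the \emph{thin} family of inputs $\u=\v(\balpha)+\e$ rather than over a generic input --- which is exactly why the explicit non‑degeneracy verification is the crux, and is what the cited works \cite{bleichenbacher_decoding_2003,brown_probabilistic_2004,schmidt_collaborative_2009} carry out.
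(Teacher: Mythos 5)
First, a framing point: the paper gives no proof of Theorem~\ref{thm:kiayas} --- it is imported as is from \cite{bleichenbacher_decoding_2003,brown_probabilistic_2004,schmidt_collaborative_2009} --- so your attempt must stand on its own, and it does not. The decisive step, certifying $R\not\equiv 0$ by exhibiting one error matrix of support $E$ admitting no alternative decoding, is essentially the entire content of the theorem, and you acknowledge that this verification is precisely what the cited works carry out. A proof whose crux is delegated to the references is a restatement, not a proof. There is also a mismatch of notions: your reformulation targets (at best) the absence of a second \emph{combinatorial} decomposition $\bomega_j=\v'(\alpha_j)+\e'_j$ with $|E'|\le\varepsilon$, whereas the cited results --- and the way this paper uses them, via the reduction to the key equation \eqref{eq:keyEqAsRFR} and the rank of the solution module of the associated SRFR --- concern uniqueness of the solutions $(\bvarphi,\psi)$ with $\deg\bvarphi<N+\varepsilon$, $\deg\psi<1+\varepsilon$. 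Spurious low-degree solutions of the key equation need not arise from a second codeword, so combinatorial uniqueness does not deliver the statement in the form the surrounding text needs.

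Second, the dimension count meant to make each per-$E'$ obstruction a proper closed set is wrong at the stated threshold. For fixed $E'$ your system decouples over the components: each $\rho_i$ carries $N-f+|E\cup E'|$ unknowns against $|E\setminus E'|$ equations, and stacking the $n$ copies multiplies both sides by $n$, so ``jointly over-determined'' is equivalent to ``per-component over-determined'', i.e.\ to $|E'|<f-N=\varepsilon-1+\varepsilon/n$. For $|E'|=\varepsilon$ this requires $\varepsilon>n$. In the admissible boundary case $\varepsilon=n$ (so $f=N+\varepsilon$), the interpolation is square and solvable for \emph{every} $\e$: taking $E'$ disjoint from $E$ with $|E'|=\varepsilon$ (possible once $f\ge|E|+\varepsilon$), each $\Delta_i$ is the unique polynomial of degree $<N$ vanishing on the $N-|E|$ points of $\balpha\setminus(E\cup E')$ and taking the values $-e_{ij}$ on $E$, and $\v'=\v-\Delta\neq\v$ is a second valid decomposition with disagreement set inside $E'$. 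So your argument fails identically there, not just generically. The underlying reason is that the codeword-difference decomposition never exhibits the collaborative gain $\varepsilon/n$: that gain comes from the key-equation formulation, where the single error locator $\Lambda$ of degree at most $\varepsilon$ is constrained simultaneously by $n$ syndrome systems, and the real work is showing that the resulting stacked, error-dependent matrix generically has full column rank --- the computation your sketch gestures at but does not perform.
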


We prove a similar result in the rational function case,
\begin{theorem}[{\cite{guerrini_polynomial_2019,glz_enhancing2020}}]
    \label{thm:ISIT19}
  Fix $0< N,D,\varepsilon<f\leq q$ and $E$ such that $|E|\leq \varepsilon$. If
  $f=N+D-1+\varepsilon+\varepsilon/n$, then for all $(\v,d)$ and almost
  all error matrices $\e$ of support $E$, the SRFRwE admits a unique solution on
  the instance $\bomega$ where $\bomega_j=\v(\alpha_j)/d(\alpha_j)+\e_j$.
\end{theorem}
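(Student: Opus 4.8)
The plan is to run the SRFRwE-to-SRFR reduction sketched around Equation~\eqref{eq:keyEqAsRFR} and then to prove that the associated SRFR has a rank-one solution module for almost every error matrix. Fix $(\v,d)$ and a set $E$ with $|E|\le\varepsilon$; write $a=\prod_{j=1}^f(x-\alpha_j)$, $\Lambda=\prod_{j\in E}(x-\alpha_j)$, and for an error matrix $\e$ of support $E$ let $\u=\u(\e)$ be the vector of Lagrangian polynomials of the rows of $\bomega=\v/d(\balpha)+\e$. Consider the SRFR with moduli $a_1=\cdots=a_n=a$, numerator bounds $N+\varepsilon$, denominator bound $D+\varepsilon$, and let $s=s(\e)$ be the rank of its solution module $\cS_{\u}$. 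Then $(\Lambda\v,\Lambda d)\in\cS_{\u}$, and more generally $(\Lambda'\v',\Lambda'd')\in\cS_{\u}$ for every reduced solution $(\v',d')$ of the SRFRwE instance $\bomega$, where $\Lambda'$ is the error-locator of $\v'/d'$ (the congruence is clear and the degree bounds hold because $\deg\Lambda'\le\varepsilon$). Hence if $s=1$ both elements are $\Kx$-multiples of the single generator of $\cS_{\u}$, so they are proportional over $\K(x)$, giving $\v'/d'=\v/d$. It thus suffices to prove $s(\e)=1$ for almost all $\e$ of support $E$.

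The key difficulty is that this does not follow from Theorem~\ref{thm:genUnicity}. First, $\u(\e)$ does not range over all SRFR instances but only over the affine subspace $\cA=\{\u:\deg u_i<f,\ u_i(\alpha_j)=(v_i/d)(\alpha_j)\text{ for }j\notin E\}$ of dimension $n|E|$; and second, with these parameters Equation~\eqref{eq:existenceNumbPoints} fails --- the SRFR is over-determined by $(n-1)(D-1)$, so generic instances off $\cA$ have $s=0$, while on $\cA$ one always has $s\ge 1$. What one really needs is the analogue of Theorem~\ref{thm:genUnicity} with genericity restricted to $\cA$. A degree count on the relation module makes this plausible: if $\delta_1\le\cdots\le\delta_{n+1}$ are the shifted row degrees of a row-reduced basis, then $\sum_i\delta_i=nf-n(N+\varepsilon)-(D+\varepsilon)$ is fixed; the solution $(\Lambda\v,\Lambda d)$ shows $\delta_1\le|E|-1-\varepsilon<0$, hence $\delta_2+\cdots+\delta_{n+1}\ge(n-1)(D-1)\ge 0$, and if for a generic $\e\in\cA$ the degrees $\delta_2,\dots,\delta_{n+1}$ are as balanced as possible --- so all $\ge 0$ --- then $s=1$. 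Making this balancedness rigorous is the main obstacle; the cheap instances do not help, since $\e=0$ gives an under-determined SRFR (any polynomial of degree $\le\varepsilon$ with roots among the $\alpha_j$ times $(\v,d)$ is an extra solution, so $s>1$), and the classical regime $f\ge N+D-1+2\varepsilon$, where $s=1$ for all $\e$, is unavailable once $n\ge 2$.

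I would settle the obstacle by a dimension count adapting the Bleichenbacher--Kiayias--Yung analysis behind Theorem~\ref{thm:kiayas} to rational functions. If $s\ge 2$ there is a reduced $\v'/d'\ne\v/d$ with $\deg v'_i<N$, $\deg d'<D$ at Hamming distance $\le\varepsilon$ from $\bomega$; at the $\ge f-2\varepsilon$ positions error-free for both, $\v/d$ and $\v'/d'$ agree, so the nonzero polynomials among the $v_id'-v'_id$ (each of degree $\le N+D-2$) share that many roots, forcing $\v/d$ and $\v'/d'$ to disagree on a set $B$ with $|B|\ge f-(N+D-2)=1+\varepsilon+\varepsilon/n$; then $B\setminus E$ must lie in the error support of $\v'/d'$, so for each fixed $(\v',d')$ the offending error matrices form an affine subvariety of codimension $\ge n(|B|-\varepsilon)\ge n+\varepsilon$ in the $n|E|$-dimensional space of matrices supported on $E$ (and it is empty whenever $|B\setminus E|>\varepsilon$). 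Summing over the candidate pairs $(\v',d')$ --- those that agree with $\v/d$ on at least $f-2\varepsilon$ points --- and checking that this codimension still dominates is precisely where the hypothesis $f=N+D-1+\varepsilon+\varepsilon/n$, and the term $\varepsilon/n$ in particular, is consumed; organizing this count so the estimate is tight is the real work, and it is the rational-function counterpart of the argument for Theorem~\ref{thm:kiayas}, carried out in \cite{guerrini_polynomial_2019,glz_enhancing2020}.
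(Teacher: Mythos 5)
First, a point of reference: the paper does not prove Theorem~\ref{thm:ISIT19} at all --- it is imported from \cite{guerrini_polynomial_2019,glz_enhancing2020} --- so your proposal can only be compared with the strategy of those references. Your outline does follow it: reduce SRFRwE to SRFR through the error locator as in Eq.~\eqref{eq:keyEqAsRFR}, observe that Theorem~\ref{thm:genUnicity} cannot be invoked (the instance $\u(\e)$ ranges only over an affine subspace of dimension $n|E|$, and the parameters overshoot Eq.~\eqref{eq:existenceNumbPoints} by $(n-1)(D-1)$ --- both points you diagnose correctly and they are worth stating), and then control the bad error matrices by a Bleichenbacher--Kiayias--Yung-type count. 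The architecture is the right one.

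The gap is that the decisive counting step is not carried out, and as sketched it would not close. Your local estimate is fine: for a fixed alternative $(\v',d')$ with disagreement set $B$, the consistent $\e$ form a subvariety of codimension at least $n(|B|-\varepsilon)\geq n+\varepsilon$ in the $n|E|$-dimensional space of matrices supported on $E$. But you then ``sum over the candidate pairs $(\v',d')$'', of which there are on the order of $q^{nN+D}$ over $\Fq$ (and for $n\geq 2$ agreement on $f-2\varepsilon$ points does not pin $(\v',d')$ down, since $f-2\varepsilon< N+D-1$); this number swamps a saving of $n+\varepsilon$ dimensions, so the naive union bound fails. The arguments behind Theorems~\ref{thm:kiayas} and~\ref{thm:ISIT19} avoid this by not treating $(\v',d')$ as a free parameter: one fixes the alternative support $E'$ and recovers $(\v',d')$ by interpolation from $\bomega$ at the positions outside $E'$, so that the count runs over supports and over $\e$ only --- and this is precisely where the $\varepsilon/n$ term is spent. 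That reorganization is the content of the proof, not a routine tightening. Two smaller issues: the implication ``$s\geq 2$ implies a second reduced pair $(\v',d')$ within distance $\varepsilon$'' is unjustified (a second independent solution of Eq.~\eqref{eq:keyEqAsRFR} need not factor as $\Lambda'(\v',d')$ with the stated degree bounds), though your final count targets uniqueness of the decoded pair directly and so does not rely on it; and the parenthetical that $\e=\0$ gives $s>1$ conflates rank with dimension --- for $\e=\0$ every solution $(\bvarphi,\psi)$ satisfies $\varphi_i d=\psi v_i$ exactly by a degree comparison, so the rank is $1$ even though $\dim_{\K}S_{\u}=\varepsilon+1$.
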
{}
Since the problem of SRFRwE reduces to a simultaneous rational function
reconstruction, the Equation~\eqref{eq:keyEqAsRFR} always admits a nontrivial
solution whenever $f=N+\varepsilon+(D+\varepsilon-1)/n$. Our ideal result would
be to prove a uniqueness result also in this case. Our experiments suggest the
following,
\begin{conjecture}\label{conj:fge-gen}
  Fix $0< N,D,\varepsilon<f\leq q$ and $E$ such that $|E|\leq \varepsilon$. If
  $f=N+\varepsilon+(D+\varepsilon-1)/n$, then for almost all $(\v,d)$ and almost
  all error matrices $\e$ of support $E$, the SRFRwE admits a unique solution on
  the instance $\bomega$ where $\bomega_j=\v(\alpha_j)/d(\alpha_j)+\e_j$.
\end{conjecture}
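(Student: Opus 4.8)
The plan is to reduce Conjecture~\ref{conj:fge-gen} to the generic uniqueness result Theorem~\ref{thm:genUnicity}, and then to control how the exceptional hypersurface of that theorem can meet the family of instances produced by errors of support $E$. Write $\Lambda=\prod_{j\in E}(x-\alpha_j)$, so $\deg(\Lambda)=|E|\le\varepsilon$, and let $u_i\in\Kx$ be the Lagrange interpolant with $u_i(\alpha_j)=\omega_{ij}$. As recalled after Equation~\eqref{eq:keyEqAsRFR}, the pair $(\bvarphi,\psi)=(\Lambda\v,\Lambda d)$ satisfies $\bvarphi\equiv\psi\u\bmod a$ with $\deg(\bvarphi)<N+\varepsilon$ and $\deg(\psi)<D+\varepsilon$; in other words it is a solution of the SRFR instance $\u$ with bounds $N_i=N+\varepsilon$, denominator bound $D+\varepsilon$, and moduli of degree $f_i=f$. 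Under the hypothesis $f=N+\varepsilon+(D+\varepsilon-1)/n$ these bounds satisfy exactly Equation~\eqref{eq:existenceNumbPoints}, namely $nf=n(N+\varepsilon)+(D+\varepsilon)-1$. Hence, if this SRFR instance has rank $s=1$, all its solutions are $\Kx$-multiples of one, which then shares the fraction $\v/d$; reducing it recovers $(\v,d)$ up to a constant, so the SRFRwE instance $\bomega$ has a unique solution.

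Next, apply Theorem~\ref{thm:genUnicity}: there is a nonzero polynomial $R$ in the coefficients of $\u=(u_1,\dots,u_n)$ such that $R(\u)\ne 0$ forces $s=1$ for the SRFR instance above. One then parametrizes the SRFRwE instances by the triples $(\v,d,\e)$ with $\deg(\v)<N$, $\deg(d)<D$, $d(\alpha_j)\ne 0$ for all $j$, and $\mathrm{supp}(\e)\subseteq E$: for $j\notin E$ one has $\omega_{ij}=v_i(\alpha_j)/d(\alpha_j)$, while the values $\omega_{ij}$ with $j\in E$ are free. Clearing the denominators $d(\alpha_j)$, the composite $R(\u(\v,d,\e))$ becomes a polynomial $P$ in the coefficients of $\v$, of $d$, and in the free error values $\{e_{ij}:j\in E\}$. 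If $P\not\equiv 0$, then $\{P=0\}$ is a proper subvariety of the parameter space, and a standard fibre-wise argument then recovers the two nested ``almost all'' quantifiers of the statement: outside a proper subvariety of pairs $(\v,d)$, the restriction $P(\v,d,\cdot)$ is not identically zero in $\e$, hence vanishes only on a proper subvariety of the error matrices of support $E$.

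Everything therefore reduces to proving $P\not\equiv 0$, equivalently that the family $V_E$ of SRFR instances of the form $\u=(\v/d\bmod a)+(a/\Lambda)\,\boldsymbol{w}$ with $\deg(w_i)<|E|$ is not contained in $\{R=0\}$. This is the main obstacle, and the reason the statement remains conjectural: $V_E$ is a proper subvariety of the space of all SRFR instances, of codimension at least $\varepsilon$, so generic uniqueness over $\u$ does not transfer to $V_E$ for free. Two routes seem possible. The direct one is to exhibit a single explicit triple $(\v,d,\e)$, valid uniformly in $n,N,D,\varepsilon$ and built from a structured error pattern, for which one checks by an elementary Euclidean-style argument that the relation module of the corresponding $\u$ has a single generator of negative shifted row degree; this would give $P\ne 0$ at one point, hence $P\not\equiv 0$. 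The structural route is to rerun the relation-module analysis of Section~\ref{sec:relModule} directly on $V_E$: since $u_i\equiv u_i^{(0)}\bmod(a/\Lambda)$ with $\u^{(0)}=\v/d\bmod a$, the relations of $\u$ are governed by those of $\u^{(0)}$ together with the factor $\Lambda$, and one would show that the generic shifted row degrees computed there are still attained when the coefficients of $\u$ are specialised along this parametrisation. The hard part is precisely that the rank and degree-counting arguments of Section~\ref{sec:relModule} there exploit the full genericity of $\u$; making them survive the restriction to $V_E$ is the crux. That the boundary case $D=1$, where $V_E$ is an affine subspace of interleaved Reed--Solomon error words, is already covered by Theorem~\ref{thm:kiayas} is encouraging, but does not by itself settle the general case.
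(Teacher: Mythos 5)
This statement is a \emph{conjecture}: the paper offers no proof of it, and explicitly says so (``Our experiments suggest the following''), noting only that Theorem~\ref{thm:genUnicity} is a first step and that what remains is to exhibit an instance of the prescribed form $\v(\alpha_j)/d(\alpha_j)+\e_j$ on which the genericity polynomial does not vanish. Your proposal is therefore not, and cannot honestly claim to be, a proof; to your credit, you do not claim it is. The reduction you set up --- error locator polynomial, passage to an SRFR instance with bounds $N+\varepsilon$ and $D+\varepsilon$ satisfying Equation~\eqref{eq:existenceNumbPoints}, invocation of the nonzero polynomial $R$ from Theorem~\ref{thm:genUnicity}, and the composite polynomial $P$ in the coefficients of $(\v,d,\e)$ whose nonvanishing would yield the two nested ``almost all'' quantifiers --- is exactly the framework the authors themselves describe, and your fibre-wise handling of the two quantifiers is correct as far as it goes.

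The genuine gap is the one you name yourself: proving $P\not\equiv 0$, i.e.\ that the constrained family $V_E$ of instances arising from fractions plus errors of support $E$ is not contained in the hypersurface $\{R=0\}$. This is not a technical loose end that a routine argument would close; it \emph{is} the conjecture. Theorem~\ref{thm:genUnicity} gives genericity over the full affine space of instances $\u$, and $V_E$ has large codimension in that space, so no dimension count can rescue you; one must either produce an explicit witness triple $(\v,d,\e)$ or redo the relation-module analysis of Section~\ref{sec:relModule} under the structural constraint $\u\equiv\v/d\bmod a/\Lambda$. Neither route is carried out in your proposal or in the paper. One smaller caveat: your claim that a rank-one SRFR solution space ``recovers $(\v,d)$ up to a constant'' needs the additional observation that the minimal generator has nonzero denominator component and that its fraction equals $\v/d$ after dividing out $\Lambda$; this is standard but should be stated, since the SRFRwE solution is defined to be the pair $(\v,d)$ itself and not merely the fraction class. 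In summary: correct reduction, correctly identified obstruction, no proof --- which is the honest state of the art for this statement.
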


Note that Conjecture~\ref{conj:fg-gen} is for almost all fractions $(\v,d)$
whereas Theorems~\ref{thm:kiayas} and~\ref{thm:ISIT19} are for all fractions.
This difference is due to Counterexample~\ref{ex:counter}, which states that we
can not have uniqueness for all $(\v,d)$ when $f=N+(D-1)/n$. This latter number
of evaluations matches the one of Conjecture~\ref{conj:fg-gen} in the situation
without errors $\varepsilon=0$. Remark that this obstruction does not affect
Theorems~\ref{thm:kiayas} and~\ref{thm:ISIT19} because their number of
evaluations $f$ becomes $N+D-1$ when $\varepsilon=0$.

Our result Theorem~\ref{thm:introourresult} is a first step towards
Conjecture~\ref{conj:fg-gen}: Since uniqueness of the SRFR is true generic
instance $\bomega_j$, it remains to prove the existence of an instance of the
form $\v(\alpha_j)/d(\alpha_j)+\e_j$ for any $E$ such that
$|E| \leq \varepsilon$ to prove the conjecture.

The SRFRwE was first introduced by \cite{boyer_numerical_2014} in a special case of its application, \emph{i.e.\ }the Polynomial Linear System Solving with Errors, that we will introduce in the following paragraph.

\paragraph{Polynomial linear system solving with errors}
We now suppose that we want to compute the unique solution of a PLS
$\y(x)=\v(x)/d(x)=A^{-1}\b\in \Kx^{n \times n}$ in a scenario where some errors
occur
\cite{boyer_numerical_2014,kaltofen_early_2017,guerrini_polynomial_2019}. In
detail, we fix $f$ distinct evaluation points
$\balpha=\{\alpha_1,\ldots,\alpha_f\}$ such that $d(\alpha_i)\neq 0$. In our
model, we suppose that there is a black box which for any evaluation point
$\alpha_i$, gives a solution of the evaluated systems of linear equations,
\emph{i.e.\ }$\y_i=A(\alpha_i)^{-1}b(\alpha_i)$. However, this black box could do some errors in the computations. In
particular, an evaluation $\alpha_i$ is \emph{erroneous} if
$\y_i\neq \v(\alpha_i)/d(\alpha_i)$ and we denote by
$E:=\{i\mid \y_i\neq \v(\alpha_i)/d(\alpha_i)\}$ the set of erroneous
positions. We refer to the problem of reconstructing the solution of a PLS in
this model of errors as \emph{Polynomial Linear System Solving with Errors}
(shortly PLSwE).  We observe that if $i\in E$, then there exists a nonzero
$\boldsymbol{e}_i\in \K^{n\times f}$ such that
$\y_i =\v(\alpha_i)/d(\alpha_i)+\e_i$. Hence, this problem is a special case of
SRFRwE. Here we want to reconstruct a vector of rational functions which is a
solution of a polynomial linear system. Therefore, all the results about uniqueness
of the previous sections hold. Furthermore, in \cite{kaltofen_early_2017}
authors introduced another bound which guaranties the uniqueness based on the
bounds on the degree of the polynomial matrix $A$ and the vector $\b$.

\section{Preliminaries}\label{sec:prel}

In this section we will give some definitions and set out the notation that we
will use throughout this paper. We refer
to~\cite{neiger_bases_2016} for the definitions and lemmas of this section, and
for historical references.

\subsection{Row degrees of a $\Kx$-module}

Let $\K$ be a field and $\Kx$ the ring of polynomials over $\K$. We start by
defining the row degree of a vector, then of a matrix.  Let
$\p=(p_1,\ldots,p_\nu)\in \Kx^\nu = \Kx^{1\times \nu}$ and
$\s=(s_1,\ldots,s_\nu)\in \Z^{\nu}$ a shift. 

\begin{definition}[Shifted row degree]\label{def:rowdeg}
Let $r_i=\deg(p_i)+s_i$ for $1\leq i \leq \nu$.
The \textit{$\s$-row degree} of $\p$ is $\rdeg_{\s}(p)=\max_{1\leq i \leq \nu}(r_i)$.

We also denote $\p=([r_1]_{s_1},\ldots,[r_\nu]_{s_\nu})$ a vector of polynomials where $r_i=\deg(p_i)+s_i$.
\end{definition}
We can extend this definition to polynomial matrices. In fact, let $P\in \Kx^{\rho \times \nu}$ be a polynomial matrix, with $\rho\leq \nu$.
Let $P_{j,*}$ be the $j$-th row of $P$ for $1\leq j \leq \rho$.
We can define the \textit{$\s$-row degrees} of the matrix $P$ as $\rdeg_{\s}(P):=(r_1,\ldots, r_\rho)$ where $r_j:=\rdeg_{\s}(P_{j,*})$.

Let $\cN$ be a $\Kx$-submodule of $\Kx^\nu=\Kx^{1\times \nu}$.  Since $\Kx$ is a
principal ideal domain, $\cN$ is \textit{free} of \textit{rank}
$\rho:=\rank(\cN)$ less than $\nu$ \cite[Section 12.1, Theorem
4]{dummit_Abstract_2003}. Hence, we can consider a \textit{basis}
$P \in \Kx^{\rho \times \nu}$, \textit{i.e.\ }a full rank polynomial matrix, such
that
$\cN=\Kx^{1\times \rho}P=\{\boldsymbol{\lambda}P \mid \boldsymbol{\lambda}\in
\Kx^{1\times \rho}\}$.

Our goal is to define a notion of row degrees of $\cN$ in order to study later
the $\K$-vector space
$\cN_{< r} := \left\{\p \in \cN \ \middle| \ \rdeg_{\s}(\p) < r \right\}$
for some $r \in \N$. Different bases $P$ of $\cN$ have different $\s$-row
degrees so we need more definitions. We start with row reduced bases.

Let $\boldsymbol{t}=(t_1,\ldots,t_\nu) \in \Z^{\nu}$. We denote by $X^{\boldsymbol{t}}$ a diagonal matrix whose entries are $x^{t_1},\ldots,x^{t_{\nu}}$.

\begin{definition}[Shifted Leading Matrix]
 The $\s$-leading matrix of $P$ is a matrix in $\K^{\rho \times \nu}$, whose entries are the coefficient of degree zero of $X^{-\rdeg_{\s}(P)}PX^{\s}$.
\end{definition}{}
%
%
\begin{definition}(Row reduced basis)
A basis $P\in \Kx^{\rho \times \nu}$ of $\cN$ is \textit{$\s$-row reduced} (shortly $\s$-reduced) if its leading matrix $LM_{\s}(P)$ has full rank.
\end{definition}

This definition is equivalent to \cite[Definition 1.10]{neiger_bases_2016},
which implies that all $\s$-reduced basis of $\cN$ have the same row degree, up
to permutation. We now focus on the following crucial property.

\begin{proposition}{(Predictable degree property)}\label{prop:predictable}

 $P$ is $\s$-reduced if and only if for all $\boldsymbol{\lambda}=(\lambda_1,\ldots,\lambda_{\rho})\in \Kx^{1\times \rho}$,
$$
\rdeg_{\s}(\boldsymbol{\lambda}P)=\max_{1\leq i \leq \rho}(\deg(\lambda_i)+\rdeg_{\s}(P_{i,*}))=\rdeg_{\boldsymbol{d}}(\boldsymbol{\lambda})
$$
where $\boldsymbol{d}=\rdeg_{\s}(P)$.
\end{proposition}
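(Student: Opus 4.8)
The plan is to prove the Predictable Degree Property (Proposition~\ref{prop:predictable}), which characterizes $\s$-reduced bases through the exact degree behavior under left-multiplication by polynomial row vectors $\boldsymbol{\lambda}$.

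\textbf{Setup and reduction to the leading-term analysis.} First I would reduce everything to degree $0$. Write $\boldsymbol{d}=\rdeg_{\s}(P)=(d_1,\ldots,d_\rho)$ and set $\widehat P := X^{-\boldsymbol{d}}\,P\,X^{\s}$, which is a polynomial matrix in $\K[x^{-1}]$ whose degree-$0$ coefficient (in $x^{-1}$, equivalently the ``leading'' coefficient in $x$) is precisely the $\s$-leading matrix $LM_{\s}(P)$. The key observation is that for any $\boldsymbol{\lambda}\in\Kx^{1\times\rho}$, the $\s$-row degree $\rdeg_{\s}(\boldsymbol{\lambda}P)$ equals the largest integer $\delta$ such that the coefficient of $x^{\delta}$ in $\boldsymbol{\lambda}P X^{\s}$ is nonzero; factoring out $X^{\boldsymbol{d}}$ on the left this becomes a statement about $(\boldsymbol{\lambda}X^{\boldsymbol{d}})\widehat P$. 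The inequality $\rdeg_{\s}(\boldsymbol{\lambda}P)\le\max_i(\deg\lambda_i+d_i)=\rdeg_{\boldsymbol{d}}(\boldsymbol{\lambda})$ always holds and is immediate from the triangle inequality for degrees applied rowwise (each row $P_{i,*}$ contributes a vector of $\s$-degree $\le d_i$, so $\lambda_i P_{i,*}$ has $\s$-degree $\le \deg\lambda_i+d_i$, and the max of these bounds the $\s$-degree of the sum). So the real content is the reverse inequality, equivalently the statement that this bound is attained.

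\textbf{The core argument.} Let $\delta := \rdeg_{\boldsymbol{d}}(\boldsymbol{\lambda}) = \max_i(\deg\lambda_i+d_i)$ and let $I=\{\,i : \deg\lambda_i+d_i=\delta\,\}$ be the set of indices achieving the max. I would look at the coefficient of $x^{\delta}$ in $\boldsymbol{\lambda}P X^{\s}$: only the rows $i\in I$ can contribute at this top degree, and the contribution of row $i\in I$ is $(\text{lead.\ coeff.\ of }\lambda_i)\cdot(\text{row $i$ of }LM_{\s}(P))$. Hence the coefficient of $x^{\delta}$ in $\boldsymbol{\lambda}P X^{\s}$ equals $\boldsymbol{\mu}\cdot LM_{\s}(P)$ where $\boldsymbol{\mu}\in\K^{1\times\rho}$ has entries $\mu_i = $ leading coefficient of $\lambda_i$ for $i\in I$ and $\mu_i=0$ otherwise; in particular $\boldsymbol{\mu}\ne\boldsymbol0$. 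Now if $P$ is $\s$-reduced, $LM_{\s}(P)$ has full row rank $\rho$, so $\boldsymbol{\mu}\,LM_{\s}(P)\ne\boldsymbol0$, which shows the coefficient of $x^{\delta}$ is nonzero, i.e.\ $\rdeg_{\s}(\boldsymbol{\lambda}P)=\delta$ exactly. This proves the ``only if'' direction. For the ``if'' direction (the converse), I would argue by contraposition: if $LM_{\s}(P)$ does not have full row rank, pick a nonzero $\boldsymbol{\mu}\in\K^{1\times\rho}$ with $\boldsymbol{\mu}\,LM_{\s}(P)=\boldsymbol0$, and set $\lambda_i = \mu_i\, x^{\delta - d_i}$ for a suitably large integer $\delta$ chosen so that all exponents $\delta - d_i$ are nonnegative (e.g.\ $\delta = \max_i d_i$), with $\lambda_i=0$ when $\mu_i=0$. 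Then $\rdeg_{\boldsymbol{d}}(\boldsymbol{\lambda})=\delta$, but by the computation above the coefficient of $x^{\delta}$ in $\boldsymbol{\lambda}P X^{\s}$ is $\boldsymbol{\mu}\,LM_{\s}(P)=\boldsymbol0$, so $\rdeg_{\s}(\boldsymbol{\lambda}P)<\delta=\rdeg_{\boldsymbol{d}}(\boldsymbol{\lambda})$, and the predictable degree equality fails for this $\boldsymbol{\lambda}$.

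\textbf{Anticipated obstacle.} The argument is essentially bookkeeping, but the step requiring the most care is the clean identification of ``the coefficient of $x^{\delta}$ in $\boldsymbol{\lambda}P X^{\s}$'' with ``$\boldsymbol{\mu}\cdot LM_{\s}(P)$'': one has to be careful that shifts $s_j$ may be negative, so $X^{\s}$ introduces negative powers and the notion of ``coefficient of $x^\delta$'' must be interpreted after the shift, not before; the conjugation $X^{-\boldsymbol{d}}(\cdot)X^{\s}$ is exactly the device that makes this rigorous, moving everything into $\K[x,x^{-1}]$ and letting one read off leading data as the constant term. Since this proposition is stated as equivalent to \cite[Definition 1.10]{neiger_bases_2016}, an alternative (and probably the route the authors take) is simply to cite that the predictable degree property is one of the standard equivalent characterizations of $\s$-reducedness and defer to \cite{neiger_bases_2016}; I would present the short self-contained leading-matrix argument above as the proof, since it is elementary and keeps the paper self-contained.
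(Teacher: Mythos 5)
Your proof is correct. Note, though, that the paper does not actually prove this proposition: it explicitly defers to \cite[Theorem 1.11]{neiger_bases_2016}, exactly the fallback you anticipated in your last paragraph. Your self-contained argument is the standard one and matches the proof in that reference: the inequality $\rdeg_{\s}(\boldsymbol{\lambda}P)\leq\rdeg_{\boldsymbol{d}}(\boldsymbol{\lambda})$ is the trivial direction, and the identification of the coefficient of $x^{\delta}$ in $\boldsymbol{\lambda}PX^{\s}$ with $\boldsymbol{\mu}\,LM_{\s}(P)$ for the vector $\boldsymbol{\mu}$ of leading coefficients supported on the max-achieving indices is precisely the right bookkeeping; full row rank of $LM_{\s}(P)$ then forces this to be nonzero, and the contrapositive construction $\lambda_i=\mu_i x^{\delta-d_i}$ with $\delta=\max_i d_i$ correctly witnesses failure when the leading matrix is rank-deficient. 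Two cosmetic points: the statement should exclude $\boldsymbol{\lambda}=\boldsymbol{0}$ (or adopt the convention $\rdeg_{\s}(\boldsymbol{0})=-\infty$ on both sides), and your worry about negative shifts is handled exactly as you say by working with the constant term of $X^{-\boldsymbol{d}}PX^{\s}$ in $\K[x,x^{-1}]$, which is how the paper defines $LM_{\s}(P)$ in the first place. Including your argument would make the paper self-contained at essentially no cost.
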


The proof of this classic proposition can be found for instance in
\cite[Theorem 1.11]{neiger_bases_2016}. This latter proposition is useful because it implies that
$\dim_{\K} \cN_{< r} = \sum_{\{i | r_i < r\}}(r-r_i)$ where $(r_1,\dots,r_\rho)$
is the $\s$-row degree of any $\s$-reduced basis of $\cN$.

Since we will need to define the $\s$-row degrees of $\cN$ uniquely, not just
up to permutation, we need to introduce ordered weak Popov form, which relies on the notion of pivot.

\begin{definition}[Pivot]
Let $\p \in \Kx^{1 \times \nu}$. The \textit{$\s$-pivot index} of $\p$ is $\max \{ j \ | \ \rdeg_{\s}(\p)=\deg(p_j)+s_j\}$. Moreover the corresponding $p_j$ is the \textit{$\s$-pivot entry} and $\deg(p_j)$ is the \textit{$\s$-pivot degree} of $\p$. 
\end{definition}
We can naturally extend the notion of pivot to polynomial matrices.
%
\begin{definition}((Ordered) weak Popov form)
  \label{def:owp}
The basis $P$ of $\cN$ in \textit{$\s$-weak Popov form}  if the $\s$-pivot indices of its rows are pairwise distinct. On the other hand, it is in \textit{$\s$-ordered weak Popov form} if the sequence of the $\s$-pivot indices of its rows is strictly increasing.
\end{definition}{}

A basis in $\s$-weak Popov form is $\s$-reduced. Indeed, $LM_{\s}(P)$ becomes,
up to row permutation, a lower triangular matrix with non-zero entries on the
diagonal. Hence it is full-rank.

Assume from now on that $\cN$ is a submodule of $\Kx^\nu$ of rank $\nu$ and that
$P$ is a basis of $\cN$ in $\s$-ordered weak Popov form. Then its pivot indices
must be $\{1,\dots,\nu\}$.

Weak Popov bases have a strong degree
minimality property, stated in the following lemma. 

\begin{lemma}[{\cite[Lemma 1.17]{neiger_bases_2016}}]\label{lm:lemmPivot}
  Let $\s \in \Z^{\nu}$, $P$ be a basis of $\cN$ in $\s$-weak Popov form with
  $\s$-pivot degrees $(d_1, \ldots, d_{\nu})$. Let $\p \in \cN$ whose pivot
  index is $1 \leq i \leq \nu$.  Then the $\s$-pivot degree of $\p$ is
  $\geq d_i$ or equivalently $\rdeg_{\s}(\p) \geq \rdeg_{\s}(P_{i,*})$.
\end{lemma}{}

As it turns out, ordered weak Popov basis are reduced basis for which the $\s$-row degree is unique.
The following lemma is a
consequence of Lemma~\ref{lm:lemmPivot}.

\begin{lemma}[{\cite[Lemma 1.25]{neiger_bases_2016}}]
  \label{lm:uniqrdeg}
  Let $\s \in \Z^\nu$ and assume $\cN$ is a submodule of $\Kx^\nu$ of rank
  $\nu$.  Let $P$ and $Q$ be two bases of $\cN$ in $\s$-ordered weak Popov
  form. Then $P$ and $Q$ have the same $\s$-row degrees and $\s$-pivot degrees.
\end{lemma}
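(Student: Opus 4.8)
The plan is to derive everything from the degree‑minimality property of weak Popov bases, Lemma~\ref{lm:lemmPivot}, applied symmetrically to $P$ and $Q$. First I would record the structural fact already noted in the excerpt: since $\cN$ has rank $\nu$ and both $P$ and $Q$ are in $\s$-ordered weak Popov form, each has exactly $\nu$ rows and the strictly increasing sequence of $\s$-pivot indices of its rows must be $1, 2, \ldots, \nu$. In particular, for every $1 \leq i \leq \nu$ the $i$-th row $P_{i,*}$ has $\s$-pivot index $i$, the $\s$-pivot entry of $P_{i,*}$ sits in column $i$, and likewise $Q_{i,*}$ has $\s$-pivot index $i$ with pivot entry in column $i$.

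Next, fix $i$ and write $(d_1, \ldots, d_\nu)$ and $(d'_1, \ldots, d'_\nu)$ for the $\s$-pivot degrees of $P$ and of $Q$, respectively. The row $Q_{i,*}$ is an element of $\cN$ with $\s$-pivot index $i$, so Lemma~\ref{lm:lemmPivot} applied to the $\s$-weak Popov basis $P$ gives that the $\s$-pivot degree of $Q_{i,*}$ is $\geq d_i$, that is, $d'_i \geq d_i$. Exchanging the roles of $P$ and $Q$ and applying the same lemma to the basis $Q$ with the vector $P_{i,*}$ yields $d_i \geq d'_i$. Hence $d_i = d'_i$ for all $i$, so $P$ and $Q$ have the same $\s$-pivot degrees.

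Finally, by the definition of $\s$-pivot index, the fact that $P_{i,*}$ has $\s$-pivot index $i$ means precisely that $\rdeg_{\s}(P_{i,*})$ equals $\deg$ of the column-$i$ entry of $P_{i,*}$ plus $s_i$, i.e. $\rdeg_{\s}(P_{i,*})$ is the $\s$-pivot degree of $P_{i,*}$ plus $s_i$; the same holds for $Q_{i,*}$. Since we have just shown the $\s$-pivot degrees agree row by row, the $\s$-row degrees agree row by row as well, which is the claim. I do not expect a genuine obstacle here: the content is the two-line symmetric application of Lemma~\ref{lm:lemmPivot}, and the only point requiring a little care is the bookkeeping that guarantees the pivot indices of both bases are exactly $\{1,\ldots,\nu\}$, so that the lemma can be invoked with matching row indices on each side.
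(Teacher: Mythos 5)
Your proof is correct and follows exactly the route the paper indicates: the lemma is stated there as a consequence of Lemma~\ref{lm:lemmPivot}, and your symmetric application of that lemma to $P_{i,*}$ and $Q_{i,*}$ (after noting that the ordered pivot indices of both bases must be $1,\dots,\nu$) is the intended argument. The final step recovering the row degrees from the pivot degrees via $\rdeg_{\s}(P_{i,*}) = d_i + s_i$ is also handled correctly.
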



\subsection{Link between pivot and leading term}

In this section, we will focus on the relation between pivots of weak Popov
bases and leading terms w.r.t. a specific monomial order, as in Gr\"obner basis theory (see for instance
\cite{cox_using_1998}).

Let $\Kxx:=\K[x_1, \ldots, x_{n}]$ be the ring of multivariate polynomials.
Recall that a \textit{monomial in} $\Kxx$ is a product of powers of the
indeterminates $\x^{\i} := x^{i_1}_1\cdots x^{i_n}_{n}$ for some
$\i:=(i_1, \ldots, i_{n})\in \N^{n}$.  On the other hand, a \textit{monomial
  in} $\Kxx^{n}$ is $\x^{\i} \eps_j$, where $\eps_1, \ldots,\eps_{n}$ is the
canonical basis of the $\Kxx$-module $\Kxx^{n}$.

A \textit{monomial order} on $\Kxx^{n}$ is a \textit{total order} $\prec$ on
the monomials of $\Kxx^{n}$ such that, for any monomials
$\varphi \eps_{\i},\psi \eps_{\j}\in \Kxx^{n}$ and any monomial $\tau \neq 1$,
$\tau \in \Kxx$,
$$
\varphi \eps_{\i} \prec \psi \eps_{\j}
\Longrightarrow
\varphi \eps_{\i} \prec \tau\varphi\eps_{\i} \prec \tau \psi \eps_{\j}.
$$
Given a
monomial order $\prec$ on $\Kxx^{n}$ and $f\in \Kxx^{n}$, the
\textit{$\prec$-initial term} $in_{\prec}(f)$ of $f$ is the term of $f$ whose
monomial is the greatest with respect to the order $\prec$.
We remark that in the case of $\Kx$, the only monomial order must be the natural degree order $ x^a<x^b \Longleftrightarrow a<b$. 

\begin{definition}{(shifted-TOP order)}\label{def:shiftedTOP}
Let $\prec$ be a monomial order on $\Kxx$. We consider the $\Kxx$-module $\Kxx^{n}$ with its canonical basis $\eps_1, \ldots, \eps_{n}$ and let $\gamma_1, \ldots, \gamma_{n}$ be monomials in $\Kxx$.
Then $\prec$ induces the following monomial order on $\Kxx^{n}$ called $\s$-TOP (Term Over Position):
$$
\varphi\eps_i\prec_{\s-TOP} \psi\eps_j \iff
(\varphi \gamma_i \prec \psi \gamma_j)  \text{ or }  (\varphi \gamma_i = \psi \gamma_j \text{ and } i<j)$$
for any pairs of monomials $\varphi \eps_i$ and $\psi \eps_j$ of $\Kxx^{n}$.
\end{definition}{}

As for the univariate module $\Kx^{n}$, the only monomial order $\prec$ on $\Kx$ is the \textit{natural} one. The \textit{shifting monomials} are $x^{s_i}$, defined by the shift $\boldsymbol{s}=(s_1, \ldots, s_{n})\in \N^{n}$. Hence, the $\s$-TOP order on $\Kx^{n}$ is
\begin{equation}{\label{eq:STop}}
x^a\eps_i<_{\s\text{-TOP}}x^b\eps_j \Longleftrightarrow (a+s_i,i)\prec_{lex} (b+s_j,j)
\end{equation}
where $\prec_{lex}$ is the \textit{lexicographic order} on $\Z^2$.

We can now state the link between this monomial order and the pivot's definition:  let $\p\in \Kx^{1\times n}$ and $in_{\prec_{\s\text{-TOP}}}(\p)=\alpha x^d \eps_i$ be the $\prec_{\s-TOP}$-initial term of $\p$, then the $\boldsymbol{s}$-pivot index, entry, and degree are respectively $i$, $p_i$ and $d$. This will be useful later on, in \emph{e.g.\ }Proposition~\ref{prop:relation}.

\section{Row Degree of the Relation Module}\label{sec:relModule}
Fix $m\geq n\geq 0$, and $M \in \Kx^{m\times n}$.
We consider a $\Kx$-submodule  $\cM$ of $\Kx^n$. We define the $\Kx-$\textit{module homomorphism} 
$$
\begin{array}{cccc}
\hat{\varphi_M}: &\Kx^m &\longrightarrow &\Kx^n/\cM\\
&\p &\longmapsto &\p M 
\end{array}{}.
$$
Set $\cA_{\cM, M}:=\ker(\hat{\varphi_M)}$ to get the injection
$$
\varphi_M: \Kx^m/\cA_{\cM, M} \hookrightarrow \Kx^n/\cM.
$$
We call $\cA_{\cM, M}$ the \emph{relation module} because $p\in\cA_{\cM, M} \Leftrightarrow \varphi_M(\p)=\p M =  0 \bmod \cM$, \emph{i.e.\ }$\p$ is a relation between rows of $M$.

Let $\eps_1,\ldots, \eps_m$ be the \textit{canonical basis} of $\Kx^m$, $\eps'_1, \ldots, \eps'_n$ the \textit{canonical basis} of $\Kx^n$ and $\e_i \equiv \eps_i \bmod \Kx^m/A_{\cM, M}$ for $1\leq i \leq m$. 
\begin{remark}{\label{rem: invFactForm}}
We observe that by the \emph{Invariant Factor Form of modules over Principal Ideal Domains} (\emph{cf.}~\cite[Theorem 4, Chapter 12]{dummit_Abstract_2003}), 
$
\cK := \Kx^n/\cM 
  \simeq \Kx^n/ \left\langle a_i(x)\eps'_i \right\rangle_{1\leq i \leq n}
$
for nonzeros $a_i(x)\in \Kx$ such that 
$a_n(x)|a_{n-1}(x)|\ldots|a_1(x)$.
The polynomials $a_i(x)$ are the \emph{invariants} of the module $\cM$.
We also denote $f_i:=\deg(a_i(x))$ and we observe that
$f_1\geq f_2 \geq \ldots \geq f_n$.   
\end{remark}

From now on we will assume that
$\cM = \left\langle a_i(x)\eps'_i \right\rangle_{1\leq i \leq n}$. It means that
any $\q \in \cK$ can be seen as $(q_1 \bmod a_1, \dots, q_n \bmod a_n)$.
Using the result of Lemma~\ref{lm:uniqrdeg}, we can define the row and pivot
degrees of the relation module $\cA_{\cM,M}$.
\begin{definition}[Row and pivot degrees of the relation module]\label{def:rowPivrelationMod}
  Let $\s\in \Z^m$ be a shift and $P$ be any basis of $\cA_{\cM,M}$ in ordered
  weak Popov form. The $\s$-row degrees of the relation module $\cA_{\cM,M}$ are
  $\brho:=\rdeg_{\s}(P)=(\rho_1,\ldots,\rho_m)$ and the $\s$-pivot degrees are
  $\bdelta:=(\delta_1,\ldots, \delta_m)$ where $\delta_i=\rho_i-s_i$.

\end{definition}
Throughout this paper we will also denote $\brho_M$ and $\bdelta_M$ when we want
to stress out the matrix dependency.

\subsection{Row degree as row rank profile}

In this section, we will see that the row degrees of the relation module can be
deduced from the row rank profile of a matrix associated to $\hat{\varphi}_M$. We
start by associating the pivot degree of $\p \in \cA_{\cM, M}$ to linear dependency relation.
\begin{proposition}{\label{prop:relation}}
  There exists $\p\in \cA_{\cM, M}$ with $\s$-pivot index $i$ and $\s$-pivot
  degree $d$ if and only if $x^d\e_i\in B_M^{\prec x^d\eps_i}$ where
  $B_M^{\prec x^d\eps_i} := \langle x^n\e_j \mid x^n\eps_j
  \prec_{\s-TOP}x^d\eps_i\rangle$.
\end{proposition}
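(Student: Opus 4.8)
The plan is to unwind both sides of the asserted biconditional in terms of the $\prec_{\s\text{-TOP}}$-initial term, using the dictionary established at the end of Section~\ref{sec:prel}: a vector $\p\in\Kx^{1\times m}$ with $in_{\prec_{\s\text{-TOP}}}(\p)=\alpha x^d\eps_i$ has $\s$-pivot index $i$ and $\s$-pivot degree $d$. So the existence of $\p\in\cA_{\cM,M}$ with $\s$-pivot index $i$ and $\s$-pivot degree $d$ is equivalent to the existence of a relation $\p M\equiv 0\bmod\cM$ whose initial term is exactly (a scalar multiple of) $x^d\eps_i$; equivalently, $\p$ is supported on monomials $\preceq_{\s\text{-TOP}}x^d\eps_i$ with nonzero coefficient on $x^d\eps_i$. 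The first step I would take is to record this reformulation carefully, so the proof becomes a statement about which monomials $x^n\eps_j$ can appear in a relation.

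\textbf{The forward direction.} Suppose such a $\p$ exists. Write $\p = \alpha x^d\eps_i + \sum_{x^n\eps_j\prec_{\s\text{-TOP}}x^d\eps_i} c_{n,j}\, x^n\eps_j$ with $\alpha\neq 0$. Applying $\hat\varphi_M$ (i.e.\ reducing $\p M$ modulo $\cM$) and using linearity of $\hat\varphi_M$ together with the definition $\e_j\equiv\eps_j$, we get $\alpha\, x^d\e_i + \sum c_{n,j}\, x^n\e_j = 0$ in $\Kx^n/\cM$. Dividing by $\alpha$ exhibits $x^d\e_i$ as a $\K$-linear combination of the $x^n\e_j$ with $x^n\eps_j\prec_{\s\text{-TOP}}x^d\eps_i$, which is precisely the statement $x^d\e_i\in B_M^{\prec x^d\eps_i}$. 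Here one must be a little careful that $B_M^{\prec x^d\eps_i}$ is the $\K$-span (not merely the $\Kx$-span) of those $x^n\e_j$; I would check the bracket notation in the statement is meant as $\K$-span, which is forced by the fact that only finitely many monomials are $\prec_{\s\text{-TOP}}$-below $x^d\eps_i$ and the identity above is a finite $\K$-linear relation.

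\textbf{The reverse direction and the main obstacle.} Conversely, if $x^d\e_i\in B_M^{\prec x^d\eps_i}$, then $x^d\e_i = \sum_{x^n\eps_j\prec_{\s\text{-TOP}}x^d\eps_i} c_{n,j}\, x^n\e_j$ for some $c_{n,j}\in\K$; setting $\p := x^d\eps_i - \sum c_{n,j}\, x^n\eps_j$ gives an element of $\Kx^m$ with $\hat\varphi_M(\p)=0$, hence $\p\in\cA_{\cM,M}$, and by construction $in_{\prec_{\s\text{-TOP}}}(\p)=x^d\eps_i$ (the coefficient of $x^d\eps_i$ is $1\neq0$ and all other monomials appearing are strictly $\prec_{\s\text{-TOP}}$-smaller), so $\p$ has $\s$-pivot index $i$ and $\s$-pivot degree $d$. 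The only genuine subtlety — and the step I expect to need the most care — is making sure the ``$\prec_{\s\text{-TOP}}$'' bookkeeping is airtight: one needs that adding monomials strictly below $x^d\eps_i$ cannot create a new initial term (immediate, since $\prec_{\s\text{-TOP}}$ is a total order) and, in the forward direction, that the reduction modulo $\cM$ does not accidentally cancel the $x^d\e_i$ term against something — but it cannot, because the equation $\hat\varphi_M(\p)=0$ already holds as an identity in the quotient, so no further reduction is performed; we simply read off the $\K$-linear relation it encodes. Everything else is a direct translation through the pivot/initial-term correspondence.
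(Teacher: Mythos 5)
Your proof is correct and follows essentially the same route as the paper's: decompose $\p$ as a nonzero multiple of $x^d\eps_i$ plus terms supported on monomials strictly $\prec_{\s\text{-TOP}}$-smaller, and translate membership of $\p$ in $\cA_{\cM,M}$ into the statement that $x^d\e_i$ lies in the $\K$-span of the images of those smaller monomials (your reading of $\langle\cdot\rangle$ as a $\K$-span is the intended one, as the generators are listed monomial by monomial). The paper packages both directions into a single chain of equivalences, but the content is identical.
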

\begin{proof}
  Fix $i,d\in\N$ and let $\p\in \Kx^n$ with $\s$-pivot index $i$ and $\s$-pivot
  degree $d$, so $r:=\rdeg_{\s}(\p)=d+s_i$. Then
  $\p=([\leq r]_{s_1},\ldots, [\leq r]_{s_{i-1}},
  [r]_{s_i},[<r]_{s_{i+1}},\ldots,[<r]_{s_m})$ (see Definition~\ref{def:rowdeg}) and we can write $\p=cx^d\eps_i+\p'$ where $c\in \K^*$ and
  $\p'=([\leq r]_{s_1},\ldots, [\leq r]_{s_{i-1}},
  [<r]_{s_i},[<r]_{s_{i+1}},\ldots,[<r]_{s_m})$. So $\p \in \cA_{\cM, M}$ has
  $s$-pivot index $i$ and degree $d$ $\Leftrightarrow$
  $x^d\eps_i = -1/c \ \p' \bmod \cA_{\cM, M}$ $\Leftrightarrow$
\begin{equation*}
      x^d \e_i \in
      \left\langle x^n\e_j \ \middle| 
        \begin{array}{ll}
          n + s_j \leq d+s_i,&\text{for } 1\leq j\leq i-1\\
          n + s_j <    d+s_i,&\text{for } i\leq j\leq m 
        \end{array}
      \right\rangle = B_M^{\prec x^d\eps_i}. \ \square
\end{equation*}{}
\let\qed\relax
\end{proof}{}
\begin{theorem}{\label{thm:pivot}}
  Let $\bdelta$ be the $\s$-pivot degrees of the relation module $\cA_{\cM,M}$.
  Then $\delta_j=min\{d \mid x^{d}\e_j\in B_M^{\prec x^{d}\eps_j}\}$ for any
  $1\leq j\leq m$.
\end{theorem}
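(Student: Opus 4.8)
The plan is to deduce this theorem directly from Proposition~\ref{prop:relation} together with the uniqueness of pivot degrees for ordered weak Popov bases (Lemma~\ref{lm:uniqrdeg}). Fix an index $1 \leq j \leq m$ and set $\delta'_j := \min\{d \mid x^d \e_j \in B_M^{\prec x^d \eps_j}\}$; I want to show $\delta_j = \delta'_j$. First I would observe that the set on the right is nonempty, so $\delta'_j$ is well-defined: since $\cA_{\cM,M}$ has rank $m$ (the quotient $\Kx^n/\cM = \cK$ is torsion, hence $\hat\varphi_M$ has torsion image and its kernel has full rank $m$), any ordered weak Popov basis $P$ of $\cA_{\cM,M}$ has pivot indices exactly $\{1,\dots,m\}$, so there is a basis row with pivot index $j$, and Proposition~\ref{prop:relation} then gives some $d$ (namely $d = \delta_j$) with $x^d \e_j \in B_M^{\prec x^d \eps_j}$.

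Next, the inequality $\delta'_j \leq \delta_j$ is immediate: the pivot degree $\delta_j$ of the basis row with pivot index $j$ is, by Proposition~\ref{prop:relation}, an element of the set whose minimum defines $\delta'_j$. For the reverse inequality $\delta_j \leq \delta'_j$, I would take $d = \delta'_j$; by Proposition~\ref{prop:relation} there exists $\p \in \cA_{\cM,M}$ with $\s$-pivot index $j$ and $\s$-pivot degree $\delta'_j$. Now apply Lemma~\ref{lm:lemmPivot}: for any element of $\cA_{\cM,M}$ with pivot index $j$, its $\s$-pivot degree is $\geq \delta_j$ (the pivot degree of the weak Popov basis row with pivot index $j$). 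Hence $\delta'_j \geq \delta_j$. Combining the two inequalities gives $\delta_j = \delta'_j$, which is the claim. One should note that $\bdelta$ is well-defined independently of the choice of ordered weak Popov basis by Lemma~\ref{lm:uniqrdeg}, so the statement makes sense.

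I do not expect a serious obstacle here — the theorem is essentially a repackaging of Proposition~\ref{prop:relation} into a minimality statement, and the only slightly delicate point is making sure the relevant sets are nonempty and that $\cA_{\cM,M}$ genuinely has rank $m$ so that every index $1,\dots,m$ occurs as a pivot of the ordered weak Popov basis. The interesting content (the translation between pivot degrees and membership in the ``staircase'' module $B_M^{\prec x^d\eps_j}$) has already been done in Proposition~\ref{prop:relation}; what remains is to phrase it as ``the pivot degree is the least $d$ that works,'' which is exactly what the predictable-degree/minimality machinery of Lemma~\ref{lm:lemmPivot} delivers.
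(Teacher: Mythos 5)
Your proposal is correct and follows essentially the same route as the paper's proof: both directions are obtained by applying Proposition~\ref{prop:relation} in each sense and invoking Lemma~\ref{lm:lemmPivot} for the minimality inequality. The only addition is your explicit check that $\cA_{\cM,M}$ has full rank $m$ so the minimum is well-defined, which the paper leaves implicit; this is a harmless and reasonable refinement.
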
{}
\begin{proof}
  Fix $1\leq j \leq m$. During this proof we denote
  $\overline{\delta}_j := min\{d \mid x^{d}\e_j\in B_M^{\prec x^{d}\eps_j}\}$. We want to prove that
  $\delta_j=\overline{\delta}_j$. Recall that by Proposition~\ref{prop:relation}, 
  $x^{\delta_j}\e_j\in B_M^{\prec x^{\delta_j}\eps_j}$. Hence, by the minimality
  of $\overline{\delta}_j$, $\delta_j\geq \overline{\delta}_j$. On the other
  hand,
  $x^{\overline{\delta}_j}\e_j\in B_M^{\prec x^{\overline{\delta}_j}\eps_j}$ so by Proposition~\ref{prop:relation}
  there exists $\p \in \cA_{\cM, M}$ of $\s$-pivot index $j$ and degree
  $\overline{\delta}_j$. Finally, by Lemma~\ref{lm:lemmPivot} we can conclude that
  $\overline{\delta}_j\geq \delta_j$.
\end{proof}{}

We now define the \emph{ordered matrix} $Mo_M$ as the matrix of
$\hat{\varphi}_M$ w.r.t. particular $\K$-vector space bases: the rows of $Mo_M$
from top to bottom are the monomials
of $\Kx^m$ sorted
increasingly for the $\prec_{\s-TOP}$ order (see Eq.~\eqref{eq:STop}). The
columns of $Mo_M$ are written \emph{w.r.t.} the basis
$\{\x^i\eps'_j\}_{\substack{1\leq j \leq n\\0\leq i < f_j}}$ of
$\Kx^n/\cM$. Therefore, $Mo_M$ has finite rank
$\rank(Mo_M) = \rank(\hat{\varphi}_M) = \rank(\varphi_M)$, infinite number of
rows and $(\sum_{i=1}^n f_i) = \dim_{\K}(\Kx^n/\cM)$ columns.


\paragraph{Monomial row rank profile}
Our goal is to relate the row rank profile of $Mo_M$ to the row degree of the
relation module. The classic definition of row rank profile of a rank
$r$ polynomial matrix is the lexicographically smallest sequence of $r$ indices
of linearly independent rows (\emph{cf.}~\cite{DPS15} for instance).  Since the
rows of our ordered matrix $Mo_M$ correspond to monomials, we will transpose the
previous definition to monomials instead of indices.

Let $\Mon_r$ be the sets of $r$ monomials of $\Kx^m$. We define the
lexicographical ordering on $\Mon_r$ by comparing lexicographically the sorted
monomials for $\prec_{\s-TOP}$. In detail, $ \cF <_{lex} \cF'$ iff
there exists $1 \leq t \leq r$ s.t. $x^{i_l} \eps_{j_l} = x^{u_l} \eps_{v_l}$ for $l < t$ and $x^{i_t} \eps_{j_t} \prec_{\s-TOP} x^{u_t} \eps_{v_t}$
where $\cF = \{x^{i_l} \eps_{j_l}\}_{1 \leq l \leq r}$ and
$\cF' = \{x^{u_l} \eps_{v_l}\}_{1 \leq l \leq r}$ and both
$\{x^{i_l} \eps_{j_l}\}$ and $\{x^{u_l} \eps_{v_l}\}$ are increasing for the
$\prec_{\s-TOP}$ order.

We will use this lexicographic order on monomials to define the row rank profile
of $Mo_M$.
Let $r=\rank(Mo_M)$.


\begin{definition}[Row rank profile]
  \label{def:rrp}
  For any matrix $M \in \Kx^{m \times n}$, we define the \emph{row rank profile} 
  of $Mo_M$ (shortly $RRP_M$) as the family of monomials of $\Kx^m$ defined by
  $RRP_M := min_{<_{lex}} \cP_M $ where  
  $$
  \cP_M := \left\{ \cF \in \Mon_r \ \middle| \ \{m M\}_{m \in \cF}
    \text{ are linearly independent in } \cK \right\}.
  $$
  
\end{definition}
We now introduce a particular family of monomials, that we will frequently use:
we will denote
$
\cF_{\d}:=\{x^i \eps_j\}_{\substack{i<d_j\\1\leq j \leq m}}
$
for any $\d = (d_1,\dots,d_m) \in \N^m$.

This family allows us to finally relate the row rank profile of $Mo_M$ to the row
degree of the relation module.

\begin{proposition}
  \label{prop:RRPandrdeg}
  The row rank profile of the ordered matrix $Mo_M$ is given by the pivot degrees $\bdelta_M$ of
  the relation module $\cA_{\cM,M}$, \emph{i.e.\ } $RRP_M = \cF_{\bdelta_M}$.
\end{proposition}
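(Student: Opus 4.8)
The goal is to show that the row rank profile $RRP_M$, defined as the lexicographically smallest family of $r=\rank(Mo_M)$ monomials of $\Kx^m$ whose images under $\hat\varphi_M$ (i.e. multiplied by $M$ and read modulo $\cM$) are $\K$-linearly independent in $\cK$, coincides with the staircase family $\cF_{\bdelta_M}=\{x^i\eps_j\}_{i<\delta_j,\,1\le j\le m}$ attached to the $\s$-pivot degrees of $\cA_{\cM,M}$. The bridge between the two sides is Theorem~\ref{thm:pivot}, which states $\delta_j=\min\{d\mid x^d\e_j\in B_M^{\prec x^d\eps_j}\}$, together with the observation that $x^d\e_j$ being the class of $x^d\eps_j M$ in $\cK$ means exactly that $x^d\eps_j M$ lies in the $\K$-span of $\{x^n\eps_i M : x^n\eps_i\prec_{\s\text{-}TOP} x^d\eps_j\}$ modulo $\cM$. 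So ``linear dependency of a row of $Mo_M$ on the strictly-smaller rows'' translates precisely into the membership condition of Theorem~\ref{thm:pivot}.

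The argument I would give is the standard greedy characterization of the row rank profile, adapted to our monomial-indexed rows. First I would record the elementary lemma: the minimal (for $<_{lex}$) family of $r$ linearly independent rows is obtained greedily — process the monomials $x^i\eps_j$ in increasing $\prec_{\s\text{-}TOP}$ order and keep $x^i\eps_j$ iff its row of $Mo_M$ is \emph{not} in the span of the rows already kept, equivalently (since the kept rows are exactly the previously-processed independent ones) iff $x^i\eps_j M\notin \langle x^n\eps_l M : x^n\eps_l\prec_{\s\text{-}TOP} x^i\eps_j\rangle \bmod\cM$, i.e. iff $x^i\e_j\notin B_M^{\prec x^i\eps_j}$. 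Thus $RRP_M=\{x^i\eps_j \mid x^i\e_j\notin B_M^{\prec x^i\eps_j}\}$. Now I would show this set equals $\cF_{\bdelta_M}$. By Theorem~\ref{thm:pivot} and the minimality defining $\delta_j$, for each fixed $j$ we have $x^d\e_j\notin B_M^{\prec x^d\eps_j}$ for all $d<\delta_j$; so $\cF_{\bdelta_M}\subseteq RRP_M$. Conversely, for $d\ge\delta_j$ we need $x^d\e_j\in B_M^{\prec x^d\eps_j}$; for $d=\delta_j$ this is Proposition~\ref{prop:relation} applied to $\delta_j$, and for $d>\delta_j$ it follows because multiplying a pivot-$j$, degree-$\delta_j$ relation $\p\in\cA_{\cM,M}$ by $x^{d-\delta_j}$ (using the predictable degree property / Lemma~\ref{lm:lemmPivot}) yields a relation with $\s$-pivot index $j$ and degree $d$, so again Proposition~\ref{prop:relation} gives $x^d\e_j\in B_M^{\prec x^d\eps_j}$. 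Hence $RRP_M\subseteq\cF_{\bdelta_M}$, giving equality as sets.

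Two points need a little care, and the second is the real obstacle. (1) One must check the cardinalities match so that $RRP_M$ genuinely has $r$ elements: $\#\cF_{\bdelta_M}=\sum_j\delta_j$, and I would argue this equals $\rank(Mo_M)=\rank(\varphi_M)$ by comparing $\dim_\K(\Kx^m/\cA_{\cM,M})$ — which the $\s$-ordered weak Popov basis gives as $\sum_j\delta_j$ via the predictable degree property — with $\rank(\varphi_M)$, since $\varphi_M$ is the injection $\Kx^m/\cA_{\cM,M}\hookrightarrow\cK$. (2) The subtle step is justifying that the greedy set is actually $<_{lex}$-minimal among \emph{all} size-$r$ independent families, not just minimal under single-element exchanges; this is the classical matroid/row-echelon fact, but stated here for an infinite family of rows, so I would phrase it carefully: any competing family $\cF'\in\cP_M$ that differs from the greedy family first at some monomial must, at that position, use a monomial $\succ_{\s\text{-}TOP}$ the greedy choice, because the greedy choice's row was independent from everything strictly below it and hence could have been (and was) taken — a standard exchange argument, but the infiniteness of the index set means I should make sure ``first difference'' is well-defined, which it is since below any fixed monomial there are only finitely many monomials. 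Modulo this bookkeeping the proof is short; the conceptual content is entirely carried by Theorem~\ref{thm:pivot} and Proposition~\ref{prop:relation}.
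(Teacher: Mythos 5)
Your proof is correct and rests on the same two pillars as the paper's own argument --- Proposition~\ref{prop:relation} and Theorem~\ref{thm:pivot} --- but it is organized differently. The paper proves only the inclusion $\cF_{\bdelta_M}\subseteq RRP_M$ (any monomial excluded from $RRP_M$ has its row in the span of the strictly $\prec_{\s-TOP}$-smaller rows, so Theorem~\ref{thm:pivot} forces $\min\{\delta \mid x^{\delta}\eps_j\notin RRP_M\}\geq\delta_j$) and then closes with the cardinality count $\#\cF_{\bdelta_M}=r=\#RRP_M$. You instead first establish the greedy/exchange characterization $RRP_M=\{x^{d}\eps_j \mid x^{d}\e_j\notin B_M^{\prec x^{d}\eps_j}\}$ and then prove \emph{both} inclusions, handling the converse one by multiplying a pivot relation of index $j$ and degree $\delta_j$ by $x^{d-\delta_j}$ (which preserves the pivot index and raises the pivot degree to $d$) so that Proposition~\ref{prop:relation} yields the dependency for every $d\geq\delta_j$. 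Your route is longer but more self-contained: it spells out the matroid-greedy fact that the paper invokes only as ``properties of row rank profile'', and it justifies the identity $\sum_j\delta_j=\rank(\varphi_M)=\dim_{\K}(\Kx^m/\cA_{\cM,M})$ that the paper asserts without proof (your appeal to the predictable degree property here is really the standard normal-form argument for weak Popov bases, but the content is right). Conversely, the paper's cardinality argument lets it skip the converse inclusion entirely. Both arguments are valid.
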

\begin{proof}
  We fix the matrix $M$ in order to simplify notations.  We define
  $\delta'_j = min \left\{ \delta \ | \ x^{\delta} \eps_j \notin RRP \right\}$
  and $\bdelta'=(\delta'_1, \dots, \delta'_m)$.  By properties of row rank
  profile, we have that $x^{\delta_j} \e_j \in B^{\prec x^{\delta_j} \eps_j}$
  (otherwise we could create a smaller family of linearly independent monomial
  with $x^{\delta_j} \e_j$). Using Theorem~\ref{thm:pivot}, we deduce that
  $\delta'_j \geq \delta_j$. Therefore
  $\cF_{\bdelta} \subset \cF_{\bdelta'} \subset RRP$. Since the families of
  monomials $\cF_{\bdelta}$ and $RRP$ have the same cardinality $r=\rank(Mo)$,
  they are equal so $\cF_{\bdelta} = RRP$.
\end{proof}

\subsection{Constraints on relation's row degree}

We will now focus on integer tuples $\bdelta_M$ which can be achieved.  For this
matter, in the light of Proposition~\ref{prop:RRPandrdeg}, we need to understand
which families $\cF_{\d}$ of monomials can be linearly independent in the
ordered matrix, \emph{i.e.\ }belong to $\cP_M$ (see Definition~\ref{def:rrp}).

Recall that
$\cK=\Kx^n/\cM=\Kx^n/\left\langle a_i(x)\eps'_i \right\rangle_{1\leq i \leq n}$
and $f_i=\deg(a_i(x))$ are non-increasing as in Remark~\ref{rem: invFactForm}. Recall also from Definition~\ref{def:rrp} that $\cP_M$ is the set of families $\cF$ of $r$ monomials in $\Kx^m$ such that $\{m M\}_{m \in \cF}$ are linearly independent in $\Kx^n/\cM$.

\begin{theorem}
  \label{thm:clement}  
  Let $\d \in \N^m$ be non-increasing. We can extend $\f \in \N^m$ by $f_{n+1}=\ldots=f_m=0$. Then 
  $\exists M\in \Kx^{m\times n}$ such that 
  $\cF_{\d} \in \cP_M$ 
  if and only if $\sum_{i=1}^l d_i\leq\sum_{i=1}^l f_i$ for all $1\leq l\leq m$.
\end{theorem}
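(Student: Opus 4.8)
The statement is a combinatorial characterization of which "staircase" families $\cF_{\d}$ can be linearly independent after multiplying by some $M$. I would prove the two directions separately.

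For the \emph{necessary} direction (``only if''), suppose $M\in\Kx^{m\times n}$ is such that the monomials $\{x^i\eps_j\}_{i<d_j,\,1\le j\le m}$ map to $\Kx^n/\cM$-linearly independent elements. Fix $1\le l\le m$. I would project onto a well-chosen quotient to bound $\sum_{i=1}^l d_i$. Concretely, the $d_i$ (hence $f_i$) are non-increasing, so the monomials contributing to the first $l$ rows, i.e.\ $\{x^i\eps_j\}_{i<d_j,\,1\le j\le l}$, must already be independent in $\cK=\Kx^n/\langle a_i\eps'_i\rangle$. These $\sum_{i=1}^l d_i$ elements live in the image of a rank-$\le l$ submatrix (the top $l$ rows $M_{1..l,*}$ of $M$), so they span a $\Kx$-submodule of $\cK$ generated by $l$ elements. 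Intersecting with the structure of $\cK$: a submodule of $\cK$ generated by $l$ elements has $\K$-dimension at most $\sum_{i=1}^{l}f_i$, because $\cK\simeq\bigoplus_{i=1}^n \Kx/(a_i)$ with $f_1\ge\cdots\ge f_n$ and any cyclic submodule of $\cK$ embeds in $\Kx/(a_1)$ (dimension $f_1$), and more generally an $l$-generated submodule embeds into $\bigoplus_{i=1}^l \Kx/(a_i)$ — this is where the divisibility $a_n|\cdots|a_1$ of the invariant factors is essential, and I'd prove the dimension bound by induction on $l$ using that the image of a single generator modulo the span of the others is cyclic, hence of dimension $\le f_1$ after a shift of indices. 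Thus $\sum_{i=1}^l d_i=\dim_\K\langle x^i\eps_j M\rangle\le\sum_{i=1}^l f_i$.

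For the \emph{sufficient} direction (``if''), assume $\sum_{i=1}^l d_i\le\sum_{i=1}^l f_i$ for all $l$; I must \emph{construct} an $M$ realizing $\cF_{\d}\in\cP_M$. The natural idea is to send $\eps_j\mapsto$ a row whose entries are generic enough. I would take $M$ with generic (or suitably structured) polynomial entries of controlled degrees, say $\deg M_{j,k}<f_k$ truncated appropriately, and argue that for a Zariski-generic choice of the coefficients the relevant $\sum d_i\times\sum d_i$ minor of the matrix representing $\{x^i\eps_j M\bmod\cM\}$ in the monomial basis of $\cK$ is nonzero. The heart of this is to exhibit \emph{one} specialization making the minor nonzero; a clean choice is a "greedy"/triangular one: because the partial sums of $\d$ are dominated by those of $\f$, a Hall-type / Gale–Ryser argument lets me injectively assign to each monomial $x^i\eps_j$ (for $i<d_j$) a basis monomial $x^{i'}\eps'_{j'}$ of $\cK$ with $i'<f_{j'}$, in an order-compatible way; then I design $M$ (even a $0/1$-ish structured matrix, e.g.\ built from appropriate monomials) so that $x^i\eps_j M$ has that assigned monomial as its $\prec_{\s\text{-TOP}}$-leading term modulo $\cM$, making the system triangular with nonzero diagonal. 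The Hall condition to be checked is exactly $\sum_{i=1}^l d_i\le\sum_{i=1}^l f_i$ for all $l$, after noting $f$ and $d$ are non-increasing so the "worst" subsets are prefixes.

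The main obstacle I anticipate is the sufficiency direction: correctly setting up the bipartite matching between source monomials $\{x^i\eps_j : i<d_j\}$ and target basis monomials $\{x^{i'}\eps'_{j'} : i'<f_{j'}\}$ so that it is simultaneously (i) a valid injection guaranteed by the prefix-sum inequalities, and (ii) compatible with the $\prec_{\s\text{-TOP}}$ order and with the module relations in $\cM$, so that a concrete matrix $M$ can be written down making the transition matrix triangular. Once that combinatorial core is in place, genericity of the determinant follows immediately (a single nonzero specialization means the minor, as a polynomial in the entries of $M$, is not identically zero), and the necessity direction reduces to the invariant-factor dimension bound, which is standard linear algebra over $\Kx$ exploiting $a_n\mid\cdots\mid a_1$.
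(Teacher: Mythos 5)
Your necessity direction is sound and is essentially the paper's argument: the paper isolates it as a lemma stating that a rank-$l$ submodule of $\cK$ has $\K$-dimension at most $f_1+\cdots+f_l$, proved via the Hermite normal form of a basis (each row with first nonzero entry at index $k_j$ is annihilated by $a_{k_j}$, and the $k_j$ are distinct). Your route through the invariant factors of an $l$-generated submodule reaches the same bound; just be aware that the naive ``each generator is killed by $a_1$'' only gives $l\cdot f_1$, so you genuinely need the divisibility of the submodule's invariant factors into those of $\cK$ (or the Hermite-form argument), which you do invoke.

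The sufficiency direction has a genuine gap exactly where you anticipate one. A Hall/Gale--Ryser matching between source monomials $\{x^i\eps_j: i<d_j\}$ and target monomials $\{x^{i'}\eps'_{j'}: i'<f_{j'}\}$ is not enough, because the family $\{x^i\eps_j M\}_{i<d_j}$ is not an arbitrary collection you can steer monomial by monomial: it is a union of $x$-orbits (Krylov sequences) of the $m$ row vectors $\v_j=\eps_j M$. Choosing $\v_j$ fixes all $d_j$ leading terms of $x^0\v_j,\dots,x^{d_j-1}\v_j$ simultaneously, including their behaviour under reduction modulo the $a_{j'}$, so an order-compatible injection need not be realizable by any $M$, and exhibiting the realizing vectors is the actual content of the proof. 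The paper does this constructively: starting from $K_j=K(\eps_j,f_j)$ it builds $\barK_j$ by borrowing unused columns $K(x^{s_l}\u_{j_l},t_l)$ from earlier blocks (this is where the prefix-sum inequalities enter), then completes each block to a genuine Krylov matrix $K(\v_j,d_j)$ with $\v_j=\u_j+\sum_l x^{s'_l}\u_{j_l}$; the nontrivial points are that $s'_l\geq 0$ (which uses the non-increasingness of $\d$) and that $\babarK=\barK T$ for an upper triangular $T$, so full column rank is preserved. Your ``triangular specialization'' intuition is the right shape, but without this explicit column-accounting construction the existence claim is unproved. (Also, note the theorem only asserts existence of one $M$; the genericity statement is a separate corollary in the paper.)
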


The non-increasing property of $\d$ can be lifted: let $\d$ be non-increasing and $\d'$ be any permutation of $\d$. Then $\exists M\in \Kx^{m\times n}$ such that $\cF_{\d} \in \cP_M$ if and only if $\exists M'\in \Kx^{m\times n}$ such that $\cF_{\d'} \in \cP_{M'}$. Indeed, permuting $\d$ amounts to permuting the components of $\p$,\emph{i.e.\ }permuting the rows of $M$. This does not affect the existence property. 

The latter proposition is an adaptation of \cite[Proposition 6.1]{Vil97} and
its derivation \cite[Theorem 3]{PS07}. Even if the statements of these two
papers are in a different but related context, their proof can be applied almost
straightforwardly.
We will still provide the main steps of the proof, for the sake of clarity and also because we will have to adapt the proof later in Theorem~\ref{thm:genUnicity}.
Note also that we complete the 'if' part of the proof because it was not detailed in earlier references. For this matter, we introduce the following
\begin{lemma}
  \label{lm:dimrankr}
Let $\cN$ be a $\Kx$-submodule of $\cK$ of rank $l$.
Then the dimension of $\cN$ as $\K$-vector space is at most $f_1+\ldots+f_l$.
\end{lemma}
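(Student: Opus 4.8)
The plan is to prove Lemma~\ref{lm:dimrankr} by induction on $\dim_\K \cK = f_1 + \cdots + f_n$, using heavily the divisibility chain $a_n \mid \cdots \mid a_1$ from Remark~\ref{rem: invFactForm} together with the decomposition $\cK \cong \bigoplus_{i=1}^n \Kx/(a_i)$. By the invariant factor form I write $\cN \cong \bigoplus_{i=1}^{l} \Kx/(b_i)$ with $b_l \mid \cdots \mid b_1$, so that ``rank $l$'' means $\cN$ has $l$ invariant factors (equivalently, is minimally generated by $l$ elements) and $\dim_\K \cN = \sum_{i=1}^{l} \deg b_i$. I would first record the elementary fact that, over a PID, a submodule is minimally generated by no more elements than the ambient module: writing $\cK \cong \Kx^n/L$ and pulling $\cN$ back to a submodule of the free module $\Kx^n$, which is free of rank $\le n$. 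Applied to $\cN \subseteq \cK$ this already gives $l \le n$, so every $f_i$ occurring in the statement is defined.

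For the inductive step, the base case $\cK = 0$ is trivial. Let $k$ be the number of indices $i$ with $f_i = f_1$, so that $a_1, \dots, a_k$ are associates. If $l \le k$, then $f_1 = \cdots = f_l$ and each of the $l$ generators $v$ of $\cN$ has order dividing $a_1$ (as $a_1$ annihilates $\cK$), hence $\dim_\K \Kx v \le f_1$; summing over the generators yields $\dim_\K \cN \le l f_1 = f_1 + \cdots + f_l$. If instead $l > k$, then $1 \le f_{k+1} < f_1$: the inequality $f_{k+1} < f_1$ holds by the choice of $k$, while $f_{k+1} = 0$ would mean $\cK$ has only $k < l$ nontrivial invariant factors, contradicting $l \le n$. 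I would then study multiplication by $a_{k+1}$ on $\cK$. Since $a_{k+1}$ annihilates $\Kx/(a_i)$ for every $i > k$, one gets $a_{k+1}\cK \cong \bigoplus_{i=1}^{k} \Kx/(a_i/a_{k+1})$, of dimension $k(f_1 - f_{k+1})$, and $\cK[a_{k+1}] := \ker(a_{k+1} \cdot) \cong (\Kx/(a_{k+1}))^{k+1} \oplus \bigoplus_{i=k+2}^{n} \Kx/(a_i)$, which is again in invariant factor form --- its invariant-factor degrees, in non-increasing order, being $f_{k+1}$ with multiplicity $k+1$ followed by $f_{k+2}, \dots, f_n$ --- and which satisfies $\dim_\K \cK[a_{k+1}] = \dim_\K \cK - k(f_1 - f_{k+1}) < \dim_\K \cK$.

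The short exact sequence $0 \to \cN[a_{k+1}] \to \cN \to a_{k+1}\cN \to 0$, whose second map is multiplication by $a_{k+1}$, then gives $\dim_\K \cN = \dim_\K \cN[a_{k+1}] + \dim_\K(a_{k+1}\cN)$. Since $a_{k+1}\cN \subseteq a_{k+1}\cK$ we bound $\dim_\K(a_{k+1}\cN) \le k(f_1 - f_{k+1})$, and since $\cN[a_{k+1}]$ is a submodule of $\cN$ it has rank $\le l$ and lies in $\cK[a_{k+1}]$; applying the induction hypothesis (legitimate because $\dim_\K \cK[a_{k+1}] < \dim_\K \cK$) and using $l \ge k+1$ gives $\dim_\K \cN[a_{k+1}] \le (k+1)f_{k+1} + \sum_{i=k+2}^{l} f_i$. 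Adding the two bounds and using $f_1 = \cdots = f_k$ collapses the total to exactly $\sum_{i=1}^{l} f_i$, which closes the induction. The main obstacle here is organizational bookkeeping rather than any deep idea: one has to verify carefully that $\cK[a_{k+1}]$ is again of the prescribed form and identify its top $l$ invariant-factor degrees as $(k+1)f_{k+1}, f_{k+2}, \dots, f_l$, and one has to invoke the ``submodule needs no more generators'' fact twice, once for $\cN \subseteq \cK$ (to get $l \le n$) and once for $\cN[a_{k+1}] \subseteq \cN$ (to apply the hypothesis with rank $\le l$). The structural isomorphisms for $a_{k+1}\cK$ and $\cK[a_{k+1}]$ are immediate from $\cK \cong \bigoplus_i \Kx/(a_i)$ and $a_i \mid a_{k+1}$ for $i \ge k+1$.
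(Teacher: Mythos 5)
Your proof is correct, but it takes a genuinely different and considerably heavier route than the paper's. The paper argues directly: lift the $l$ generators of $\cN$ to the rows of a matrix $B\in\Kx^{l\times n}$ and pass to its Hermite normal form $B'$; the nonzero rows of $B'$ still generate $\cN$ and have pairwise distinct leading indices $k_1,\dots,k_l$, and the divisibility chain $a_n\mid\cdots\mid a_1$ forces $a_{k_j}$ to annihilate the $j$-th row inside $\cK$, so the cyclic module it generates has $\K$-dimension at most $f_{k_j}$; summing, and using that $\f$ is non-increasing and the $k_j$ are distinct, gives $\dim_\K\cN\le f_{k_1}+\cdots+f_{k_l}\le f_1+\cdots+f_l$ with no induction and no structure theory for $\cN$ itself. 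Your argument instead inducts on $\dim_\K\cK$ via the exact sequence $0\to\cN[a_{k+1}]\to\cN\to a_{k+1}\cN\to 0$ and, in effect, proves the stronger majorization fact that the invariant-factor degrees of a submodule are dominated by those of $\cK$; the identifications of $a_{k+1}\cK$ and $\cK[a_{k+1}]$, the two uses of ``a submodule of a module generated by $k$ elements needs at most $k$ generators,'' and the final arithmetic all check out. One phrasing slip: to rule out $f_{k+1}=0$ when $l>k$ you invoke a ``contradiction with $l\le n$''; the correct reason is that $f_{k+1}=0$ lets you present $\cK$ as a quotient of $\Kx^{k}$, whence $l\le k$, contradicting $l>k$ (and in fact this case needs no exclusion at all, since then $\cK[a_{k+1}]=0$ and the same computation closes). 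The trade-off is clear: the paper's proof is shorter and self-contained, while yours yields finer structural information at the cost of the bookkeeping you acknowledge.
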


\begin{proof}
  First, remark that if $\boldsymbol{\q} \in \cN$ has its first non-zero element
  at index $p$ then $a_p(x) \boldsymbol{\q} = 0$. Now since $\cN$ has rank $l$,
  we can consider the matrix $B$ whose rows are the $l$ elements of a basis of
  $\cN$. We operate on the rows of $B$ to obtain the \emph{Hermite normal form}
  $B'$ of $B$. The rows $(\b'_i)_{1 \leq i \leq l}$ of $B'$ have first
  non-zero elements at distinct indices $k_1,\ldots,k_l$. Therefore
  $a_{k_j}(x) \b'_j = 0$ and
  $\{x^{i} \b'_j\}_{\substack{0\leq i <f_{k_j}\\1\leq j\leq l}}$ is a generating
  set of $\cN$ and so
  $\dim_{\K} \cN \leq f_{k_1}+\ldots +f_{k_l} \leq f_1+\ldots +f_l$ since
  $(f_i)$ are non increasing and $(k_j)$ pairwise distinct.
\end{proof}{}

\begin{corollary}{\label{cor:=>}}
  Let $r\geq 0$, $\d \in \N^l$ and $v_1,\ldots,v_l \in \cK $ such that
  $\{x^j\v_i\}_{\substack{0\leq j<d_i\\1 \leq i \leq l}}$ are linearly
  independent then $\sum_{i=1}^l d_i \leq \sum_{i=1}^l f_i$.
\end{corollary}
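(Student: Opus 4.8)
The plan is to derive Corollary~\ref{cor:=>} as an almost immediate consequence of Lemma~\ref{lm:dimrankr}. The hypothesis provides $l$ elements $v_1,\ldots,v_l \in \cK$ together with the assumption that the family $\{x^j\v_i\}_{\substack{0\leq j<d_i\\1\leq i\leq l}}$ is $\K$-linearly independent. First I would let $\cN$ be the $\Kx$-submodule of $\cK$ generated by $v_1,\ldots,v_l$, \emph{i.e.\ }$\cN = \langle v_1,\ldots,v_l\rangle$. Since $\cN$ is generated by $l$ elements and $\Kx$ is a principal ideal domain, its rank is at most $l$; hence Lemma~\ref{lm:dimrankr} applies and gives $\dim_\K \cN \leq f_1 + \cdots + f_l$ (here we use that the $f_i$ are non-increasing, as in Remark~\ref{rem: invFactForm}, and the convention $f_{n+1}=\cdots=f_m=0$ that extends $\f$ when $l > n$, so that the right-hand side makes sense for all relevant $l$).

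Next I would observe that the independent family in the hypothesis lives inside $\cN$: each $x^j v_i$ is visibly a $\Kx$-multiple of one of the generators, hence an element of $\cN$. Counting, this family has exactly $\sum_{i=1}^l d_i$ elements, and they are $\K$-linearly independent by assumption, so $\dim_\K \cN \geq \sum_{i=1}^l d_i$. Chaining the two inequalities yields $\sum_{i=1}^l d_i \leq \dim_\K \cN \leq \sum_{i=1}^l f_i$, which is exactly the claimed bound. The parameter $r$ in the statement plays no role in the argument and can simply be ignored (it is presumably there to match the notation of the surrounding discussion).

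I do not expect any serious obstacle here: the whole content is packaged in Lemma~\ref{lm:dimrankr}, and the corollary is just the combination ``$\K$-dimension of a submodule is squeezed between the size of an independent subset and the Hermite-form bound.'' The only point requiring a line of care is making sure $\rank(\cN) \leq l$ even when some of the $v_i$ are $\Kx$-linearly dependent or zero — but a submodule generated by $l$ elements always has rank $\leq l$ over a PID, and Lemma~\ref{lm:dimrankr}'s bound $f_1 + \cdots + f_{\rank(\cN)} \leq f_1 + \cdots + f_l$ still holds since the $f_i$ are non-increasing and non-negative. So the proof is three sentences: pass to the generated submodule, apply Lemma~\ref{lm:dimrankr} for the upper bound, and use the independent family for the lower bound.
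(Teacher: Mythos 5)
Your proof is correct and is essentially identical to the paper's: both pass to the $\Kx$-module $\cN$ spanned by $v_1,\ldots,v_l$, lower-bound $\dim_\K\cN$ by the $\sum_{i=1}^l d_i$ independent elements, and upper-bound it by $f_1+\cdots+f_l$ via Lemma~\ref{lm:dimrankr}. Your extra remark that the bound survives when $\rank(\cN)<l$ (because the $f_i$ are non-increasing and non-negative) is a point the paper glosses over, but it does not change the argument.
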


\begin{proof}
  We consider $\cN$ the $\Kx$-module spanned by $\{v_1, \ldots, v_l\}$, and we
  observe that $d_1+\ldots+d_l\leq \dim\cN\leq f_1+\ldots+f_l$ by
  Lemma~\ref{lm:dimrankr}.
\end{proof}{}

\begin{proof}[Proof of Theorem~\ref{thm:clement}]
  We observe that if $m > n$, we can write
  $\cK=\Kx^n/\left\langle a_i(x)\eps'_i \right\rangle_{1\leq i \leq n} =
  \Kx^m/\left\langle a_i(x)\eps_i \right\rangle_{1\leq i \leq m}$ where
  $a_j(x)=1$ for $n+1\leq j\leq m$. Hence we can suppose \emph{w.l.o.g.} that
  $m=n$.

$\Rightarrow)$ By the hypotheses, there exists a matrix $M \in \Kx^{m \times n}$ such that $\{x^i \eps_j M\}_{x^i \eps_j \in \cF_{\d}} = \{x^i \v_j\}_{0<i<d_j}$ are linearly independent in $\cK$ where $\v_j := \eps_j M$. 
Hence, for all $1 \leq l \leq m$, $\v_1,\dots,\v_l$ satisfy the conditions of the Corollary~\ref{cor:=>} and so $\sum_{i=1}^l d_i \leq \sum_{i=1}^l f_i$.

$\Leftarrow)$ Set $\u_i = \eps_i$ for $1 \leq i \leq m$ so that  $\{x^i\u_j\}_{\substack{i<f_j\\1\leq j\leq m}}$ are linearly independent in $\cM$.
We now consider the matrix
$
K:=[K_1|\ldots|K_m]
$
where $K_j\in \Kx^{m\times f_j}$ is in the \emph{Krylov} form,
that is 
$
K_j = K(\u_j,f_j) := [\u_j|x\u_j|\ldots|x^{f_j-1}\u_j]
$
by considering $\u_j$ as a column vector. Note that $K$ is full column rank by construction.
Our goal is to find vectors $\v_1,\dots,\v_m$ such that 
$[K(\v_1,d_1)|\dots|K(\v_m,d_m)]$
is full column rank (see $\babarK$ later).

For this matter, we first need to consider the matrix $\barK$ made of columns of
$K$ so that it remains full column rank. It is defined as
$ \barK:=[\barK_1|\ldots|\barK_m] $ where for $1\leq j \leq m$,
$\barK_j \in \Kx^{m\times d_j}$ are defined iteratively by
$$
\barK_j:=[K(\u_j,\min(f_j,d_j))|
K(x^{s_1}\u_{j_1}, t_1)|\ldots|K(x^{s_k}\u_{j_k},t_k)
]
$$
and $K(x^{s_l}\u_{j_l},t_l)$ derives from previously unused columns in $K$,
which we add from left to right, \emph{i.e.\ }$(j_l)$ are increasing. Since
$\sum_{i=1}^j d_i \leq \sum_{i=1}^j f_i$, we will only pick from previous
blocks, \emph{i.e.\ }$j_k<j$. Since we must have depleted a block $K_{i_l}$
before going to another one, we can observe that $s_l+t_l = f_l$ for $l<k$. The
last block $K_{i_k}$ is the only one that may not be exhausted, \emph{i.e.}
$s_k+t_k \leq f_k$. Conversely, $s_l=d_l$ for $l > 1$ because no columns have
been picked yet from the blocks $j_l$, except maybe the first block $j_1$ where
$s_1 \geq d_1$.

We want to transform $\barK_j$ into a Krylov matrix $\babarK_j$, working block by
block. First we extend $[K(\u_j,\min(f_j,d_j))|0|\dots|0]$ to the right to
$K(\u_j,d_j)$. Then we extend all blocks
$[0|\dots|0|K(x^{s_l}\u_{j_l}, t_l)|0|\dots|0]$ to the left and the right to
$K(x^{s'_l}\u_{j_l},d_l)$ where $s'_l$ equals $s_l$ minus the number of columns
of the left extension. In this way, the extension matches the original matrix on its
non-zero columns. Now we can define $\babarK:=[\babarK_1|\ldots|\babarK_m]$,
where $\babarK_j := K(\v_j,d_j)$ with
$\v_j := \u_j + \sum_{l=1}^k x^{s'_l} \u_{j_l}$.

A crucial point of the proof is to show that $s'_k \geq 0$. But since $d_i$
are-non increasing, $j_l$ are increasing and $j_k < j$, we get
$s_l \geq d_{j_l} \geq d_{j_k} \geq d_{j}$. As the number of columns of the left
extension is at most $d_j$, we can conclude $s'_k \geq 0$.

In \cite{Vil97} and \cite{PS07} it is proved that there exist an upper
triangular matrices $T$ such that $\babarK = \barK T$. So we can conclude that
$\babarK$, which is in the desired block Krylov form, is full column rank as is
$\barK$, which concludes the proof.
\end{proof}

\begin{example}
  We illustrate the construction of the proof of Theorem~\ref{thm:clement} with
  example. Let $m=4$, $n=3$, $\f=(8,4,4)$ extended to $f_4=0$ and
  $\d = (5,5,3,3)$. Remark that $\sum_{i=1}^l d_i\leq\sum_{i=1}^l f_i$ for all
  $1\leq l\leq m$. Then $\barK_1 = K(\u_1,d_1)$,
  $\barK_2 = [K(\u_2,f_2) | K(x^{d_1}\u_1,d_2-f_2)]$ picks its missing column
  from the first unused column of $K_1$, $\barK_3 = K(\u_3,d_3)$, and
  $\barK_4 = [K(\u_4,f_4)=\varnothing|K(x^{d_1+1}\u_1,f_1-(d_1+1)|
  K(x^{d_3}\u_3,f_3-d_3)]$ picks its 3 missing columns first from the 2 unused
  of $K_1$, then from the remaining one of $K_3$. 
  Then the construction extends $\barK$ to $\babarK = K(\v_i,d_i)$ where
  $\v_1=\u_1 = [1,0,0]$, $\v_2=\u_2+x^{d_2-(d_1-1)}\u_1=[x,1,0]$, $\v_3=\u_3=[0,0,1]$ and
  $\v_4=x^{d_1+1}\u_1 + x^{d_3-(f_1-(d_1+1))}\u_3=[x^6,0,x]$. Finally the matrix $M$ of
  the statement of Theorem~\ref{thm:clement} has its $j$-th row $M_{j,*}$ equal to $\v_j$.
  \hfill $\Diamond$
\end{example}

We now have all the cards in our hand to state the principal constraint on the pivot degree $\bdelta_M$ of the relation module $\cA_{\cM,M}$ when $M$ varies in the set of matrices $\Kx^{m \times n}$ such that $\rank(Mo_M)=\rank(\varphi_M)$ is fixed. We will denote by $\bdr_r$ the pivot degree corresponding to the constraint.

\begin{theorem}
  \label{thm:ineqrdeg}
  Recall that $\f=(f_1,\ldots,f_m)$ are the degrees of the invariants of $\cM$ where $f_i=0$ for $n+1\leq i \leq m$, and let
  $r = \rank(Mo_M)$. Then $\cF_{\bdelta_M} \geq_{lex} \cF_{\bdr_r}$ where
  \begin{equation}
    \label{eq:drg}
    \cF_{\bdr_r} = min_{<_{lex}} \left\{ \cF_{\d} \in \Mon_r
      \ \middle| \
      \forall 1 \leq l \leq m, \
      \sum_{i=1}^l d_i \leq \sum_{i=1}^l f_i
    \right\}
  \end{equation}
\end{theorem}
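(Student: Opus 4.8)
The idea is to combine the two main results already established: Proposition~\ref{prop:RRPandrdeg}, which identifies $RRP_M$ with $\cF_{\bdelta_M}$, and Theorem~\ref{thm:clement}, which characterizes exactly which families $\cF_{\d}$ (with $\d$ non-increasing, up to permutation) can occur as members of some $\cP_{M}$. The statement to be proved is essentially a ``minimality'' assertion: among all achievable pivot-degree profiles of rank $r$, the lexicographically smallest one is the explicit $\bdr_r$ defined by the staircase inequalities $\sum_{i=1}^l d_i\leq\sum_{i=1}^l f_i$, and every concrete $M$ of rank $r$ has $\cF_{\bdelta_M}$ at least this small one.

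First I would unwind the definition: by Proposition~\ref{prop:RRPandrdeg}, $\cF_{\bdelta_M}=RRP_M=\min_{<_{lex}}\cP_M$, and $\cP_M\subseteq\Mon_r$ is the set of families of $r$ monomials whose images under $M$ are $\K$-linearly independent in $\cK$. So it suffices to show two things: (a) any $\cF_{\d}\in\cP_M$ satisfies the staircase inequalities (after reordering $\d$ to be non-increasing), so that $\cF_{\bdelta_M}$ lies in the set over which the right-hand minimum is taken, hence $\cF_{\bdelta_M}\geq_{lex}\cF_{\bdr_r}$; and (b) the right-hand side is well-defined, i.e. the set in~\eqref{eq:drg} is nonempty. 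Part (a) for a family of the special form $\cF_{\d}$ is exactly Corollary~\ref{cor:=>} applied to $\v_j:=\eps_j M$: since $\{x^j\v_i\}_{0\le j<d_i,\,1\le i\le l}$ is a subset of the independent family indexed by $\cF_{\d}$, independence gives $\sum_{i=1}^l d_i\leq\sum_{i=1}^l f_i$ for every $l$ once the $d_i$ are sorted non-increasingly — and by the permutation remark following Theorem~\ref{thm:clement} sorting does not change membership in $\cP_M$. One subtlety: $RRP_M=\min\cP_M$ need not itself be of the form $\cF_{\d}$ a priori, but Proposition~\ref{prop:RRPandrdeg} tells us it is, namely $\cF_{\bdelta_M}$, so this is not an obstacle. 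For (b), nonemptiness follows from Theorem~\ref{thm:clement}: since $\rank(Mo_M)=r\leq\sum_i f_i=\dim_\K\cK$, one can choose a non-increasing $\d$ with $\sum d_i=r$ and $\sum_{i=1}^l d_i\leq\sum_{i=1}^l f_i$ for all $l$ (fill the staircase greedily), and then Theorem~\ref{thm:clement} produces an $M'$ with $\cF_{\d}\in\cP_{M'}$; alternatively one observes directly that the staircase-feasible set is a nonempty finite poset so its $<_{lex}$-minimum exists.

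The main obstacle I anticipate is making sure the two minimizations are over compatible sets. On the left, $\cF_{\bdelta_M}$ is the $<_{lex}$-minimum of $\cP_M$, a set depending on the particular $M$; on the right, $\cF_{\bdr_r}$ is the $<_{lex}$-minimum of the purely combinatorial set of staircase-feasible $\cF_{\d}$'s. The inequality $\cF_{\bdelta_M}\geq_{lex}\cF_{\bdr_r}$ then just says ``the minimum of a subset is $\geq$ the minimum of the superset'' — but one must check $\cF_{\bdelta_M}$ actually lies in the combinatorial set, which is precisely part (a), and conversely one should be careful that the combinatorial minimum is genuinely attained and that its defining condition only involves $\d$ being a valid monomial family of size $r$ (this is encoded in $\cF_\d\in\Mon_r$, forcing $\sum d_i=r$). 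Once these bookkeeping points are settled the proof is short; the real content has been front-loaded into Theorem~\ref{thm:clement} and Corollary~\ref{cor:=>}.
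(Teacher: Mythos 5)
Your proposal is correct and follows essentially the same route as the paper: identify $\cF_{\bdelta_M}$ with $RRP_M$ via Proposition~\ref{prop:RRPandrdeg}, apply the ``only if'' direction of Theorem~\ref{thm:clement} (equivalently Corollary~\ref{cor:=>} with $\v_j=\eps_j M$) to place $\cF_{\bdelta_M}$ in the set being minimized, and conclude by minimality. Your extra care about reordering $\d$ and about the nonemptiness of the set defining $\bdr_r$ matches remarks the paper makes around, rather than inside, its proof.
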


\begin{proof}
  We know from Proposition~\ref{prop:RRPandrdeg} that $RRP_M = \cF_{\bdelta_M}$
      so $\{x^i \eps_j M\}_{\substack{i < \delta_{j,M}\\1\leq j\leq m}}$ are linearly independent and $\sum_{i=1}^m \delta_{i,M} = r$. Using
  Theorem~\ref{thm:clement}, we get that
  $\sum_{i=1}^l \delta_{i,M} \leq \sum_{i=1}^l f_i$ for all $1 \leq l \leq m$. This means that $\cF_{\bdelta_M}$ belongs
  to the set whose minimum is $\cF_{\bdr_r}$, which implies our result.
\end{proof}

We observe that $r=\rank(Mo_M)$ must satisfy $0 \leq r \leq \Sigma:=\sum_{i=1}^m f_i = \dim_{\K} \Kx^n/\cM$ and that  $r = \Sigma$ is reachable since $m \geq n$.
Note also that $\bdr_r$ is well-defined in Theorem~\ref{thm:ineqrdeg} as long as $0 \leq r \leq \Sigma:=\sum_{i=1}^m f_i$ because it is related to the minimum of a non-empty set.

\subsection{Generic row degree of relation module}

We will now show that this pivot degree constraint $\bdr_{\Sigma}$ is attainable
by $\bdelta_M$ for matrices $M$ such that
$\rank(Mo_M)=\rank(\varphi_M)=\dim_{\K} \Kx^n/\cM$ in which case $\varphi_M$
becomes a bijection. More specifically, we will show that this is the case for
almost all matrices $M\in\Kx^{m \times n}$.

\begin{corollary}
  \label{cor:geneqrdeg}
  For a generic matrix $M \in \Kx^{m \times n}$, the pivot degrees $\bdelta_M$
  of the relation module $A_{\cM,M}$ satisfy $\bdelta_M = \bdr_{\Sigma}$ where
  $\Sigma=\sum_{i=1}^n f_i$.
\end{corollary}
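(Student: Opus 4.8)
The plan is to combine Theorem~\ref{thm:ineqrdeg}, which gives the lower bound $\cF_{\bdelta_M} \geq_{lex} \cF_{\bdr_\Sigma}$ for every $M$ with $\rank(Mo_M) = \Sigma$, with a genericity argument showing that the minimal value $\bdr_\Sigma$ is actually attained. Since $\bdr_\Sigma$ is defined as a lexicographic minimum over a nonempty finite set of admissible tuples $\cF_{\d}$, and since by the remark after Theorem~\ref{thm:ineqrdeg} the full rank $r = \Sigma$ is reachable, there exists at least one matrix $M_0$ with $\cF_{\bdelta_{M_0}} = \cF_{\bdr_\Sigma}$ --- indeed Theorem~\ref{thm:clement} produces such an $M_0$ explicitly (its rows are the vectors $\v_j$ coming from the block-Krylov construction applied to $\d = \bdr_\Sigma$). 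So the corollary reduces to a \emph{semicontinuity} statement: the set of $M$ for which $\bdelta_M$ equals the minimal tuple is Zariski-open and nonempty, hence generic.

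The key steps I would carry out are as follows. First, fix the target tuple $\d := \bdr_\Sigma$ and recall from Proposition~\ref{prop:RRPandrdeg} that $\bdelta_M = \d$ is equivalent to $RRP_M = \cF_{\d}$, i.e.\ to the statement that the family of monomials $\cF_{\d}$ is the lexicographically smallest family of $\Sigma$ monomials whose images under $\hat\varphi_M$ are $\K$-linearly independent in $\cK = \Kx^n/\cM$. Second, I would argue that ``$\cF_{\d} \in \cP_M$'' --- the images $\{mM\}_{m \in \cF_{\d}}$ being linearly independent in the $\Sigma$-dimensional space $\cK$ --- is an open condition on $M$: writing each $\eps_j M$ in coordinates, the vectors $\{x^i (\eps_j M) \bmod \cM\}_{i < d_j, 1 \le j \le m}$ assemble into a $\Sigma \times \Sigma$ matrix whose entries are polynomial (indeed linear) in the coefficients of the entries of $M$, and linear independence means this square matrix has nonzero determinant, which is a single polynomial inequation $R(M) \neq 0$. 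Third, by Theorem~\ref{thm:clement} applied to $\d = \bdr_\Sigma$ (which satisfies the partial-sum inequalities by definition), this determinant is not identically zero, so $R$ is a nonzero polynomial and the locus $R(M) \neq 0$ is a nonempty Zariski-open set. Fourth, on this open set we have $\cF_{\d} \in \cP_M$, hence $RRP_M \leq_{lex} \cF_{\d} = \cF_{\bdr_\Sigma}$; combined with the reverse inequality $\cF_{\bdelta_M} \geq_{lex} \cF_{\bdr_\Sigma}$ from Theorem~\ref{thm:ineqrdeg} (which needs $\rank(Mo_M) = \Sigma$, itself implied by $\cF_{\d} \in \cP_M$ since $|\cF_{\d}| = \Sigma = \dim_\K \cK$ forces full rank), and using $RRP_M = \cF_{\bdelta_M}$, we get $\bdelta_M = \bdr_\Sigma$ for all $M$ with $R(M) \neq 0$.

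The main obstacle, and the point requiring the most care, is step three: verifying that the determinant $R$ is \emph{not} the zero polynomial. This is exactly where Theorem~\ref{thm:clement} does the work --- its ``$\Leftarrow$'' direction, via the block-Krylov construction and the upper-triangular change of basis $\babarK = \barK T$ borrowed from \cite{Vil97, PS07}, exhibits a concrete $M$ (equivalently, a concrete point in the coefficient space) at which the $\Sigma \times \Sigma$ matrix is invertible, so $R$ does not vanish there. One subtlety to address is that $\bdr_\Sigma$ as defined in Theorem~\ref{thm:ineqrdeg} is non-increasing only after the permutation discussed after Theorem~\ref{thm:clement}; but since permuting the components of $\d$ corresponds to permuting the rows of $M$ and does not change whether a generic condition holds, this causes no difficulty --- one applies the construction to the sorted tuple and then permutes rows back. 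A second minor point is to confirm that $R$ genuinely depends on the coefficients of $M$ and not merely on some fixed choice: since the $\Sigma$ chosen monomials of $\Kx^m$ span enough of $\Kx^m$ to see all $m$ rows of $M$ (each $\eps_j$ appears because $d_j \geq 1$ whenever the block is nonempty, and empty blocks contribute nothing), the entries of $M$ appearing in $R$ are precisely those that matter. With these checks in place, the corollary follows.
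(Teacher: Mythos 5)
Your proposal is correct and follows essentially the same route as the paper's proof: use Theorem~\ref{thm:clement} to exhibit one matrix for which the $\Sigma\times\Sigma$ minor indexed by $\cF_{\bdr_\Sigma}$ is nonzero, view that minor as a nonzero polynomial $R$ in the coefficients of $M$, and on the locus $R(M)\neq 0$ combine $RRP_M \leq_{lex} \cF_{\bdr_\Sigma}$ with the reverse inequality from Theorem~\ref{thm:ineqrdeg} via Proposition~\ref{prop:RRPandrdeg}. Your extra remarks on the permutation needed to make $\bdr_\Sigma$ non-increasing before invoking Theorem~\ref{thm:clement} are a welcome clarification of a point the paper leaves implicit.
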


\begin{proof}
  Since $\sum_{i=1}^l \dr_{\Sigma,i} \leq \sum_{i=1}^l f_i$ for all
  $1 \leq l \leq m$, we deduce from Theorem~\ref{thm:clement} that there exists
  $M \in \Kx^{m \times n}$ such that $\{ m M \}_{m \in \cF_{\bdr_{\Sigma}}}$ are
  linearly independent.  So the $\Sigma$-minor corresponding to those lines is
  non-zero for this matrix $M$.  We now consider this $\Sigma$-minor as a
  polynomial $R$ in the coefficients of $M$. This polynomial is then nonzero since
  it admits a nonzero evaluation.

  Now for any matrix $M=(m_{i,j})$ such that $R(m_{i,j}) \neq 0$, the vectors
  $\{ m M \}_{m \in \cF_{\bdr_{\Sigma}}}$ must be linearly independent, so
  $\rank(Mo_M)=\Sigma$. We have $RRP_M \leq_{lex} \cF_{\bdr_{\Sigma}}$ because
  $\cF_{\bdr_{\Sigma}} \in \cP_M$ (see Definition~\ref{def:rrp}).
  Theorem~\ref{thm:ineqrdeg} gives the other inequality, so
  $\cF_{\bdr_{\Sigma}}=RRP_M=\cF_{\bdelta_M}$ and $\bdelta_M = \bdr_{\Sigma}$.
\end{proof}

\subsubsection{Special cases}
In this section, we will see that our definition of the generic pivot degree
$\bdr_{\Sigma}$ in Eq.~\eqref{eq:drg} has a simplified expression in a wide
range of settings.
Set the notation $\bs = \max(\s)$. We will see that under some assumptions the
expected row degree $\bpr_{\Sigma} := \bdr_{\Sigma} + \s$ has a nice
form. Define $p$ and $u$ be the quotient and remainder of the Euclidean division
$\sum_{i=1}^m(f_i+s_i) = p\cdot m + u$.
The expected nice form of the row degrees will be
\begin{equation}
  \label{eq:nicepivotdeg}
  \p := (\underbrace{p+1,\dots,p+1}_{u \text{
      times}},\underbrace{p,\dots,p}_{m-u \text{ times}}).
\end{equation}
This nice form will appear the following conditions on $\f$ and
$\s$:
\begin{equation}
  \label{eq:hyp1}
  p \geq \bs 
\end{equation}
\begin{equation}
  \label{eq:hyp2}
  \forall 1\leq l \leq m-1, \
  \sum_{i=1}^l p_i \leq  \sum_{i=1}^l (f_i + s_i)
\end{equation}
\begin{theorem}
  \label{thm:drspecialcase}
  Let $\p$ as in Equation~\eqref{eq:nicepivotdeg}, ant let $\f$ be
  non-increasing such that Equations~\eqref{eq:hyp1} and~\eqref{eq:hyp2} hold.
  Then $\bpr_{\Sigma} = \p$.
\end{theorem}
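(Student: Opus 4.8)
The plan is to identify the tuple $\d^\star := \p - \s$ (componentwise) with $\bdr_\Sigma$ from Equation~\eqref{eq:drg} for $r = \Sigma$; once this is established, $\bpr_\Sigma = \bdr_\Sigma + \s = \p$ is immediate. First I would record two facts making $\d^\star$ an admissible candidate. Hypothesis~\eqref{eq:hyp1} gives $d^\star_i = p_i - s_i \geq p - \bs \geq 0$, so $\d^\star \in \N^m$; and $\sum_{i=1}^m d^\star_i = \sum_{i=1}^m p_i - \sum_{i=1}^m s_i = (pm+u) - \sum_{i=1}^m s_i = \sum_{i=1}^m (f_i+s_i) - \sum_{i=1}^m s_i = \Sigma$, so $\cF_{\d^\star} \in \Mon_\Sigma$.

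The heart of the argument is the claim that $\cF_{\d^\star}$ is exactly the set of the $\Sigma$ smallest monomials of $\Kx^m$ for $\prec_{\s\text{-TOP}}$, where by Equation~\eqref{eq:STop} ``smallest'' is controlled by the weight $x^i\eps_j \mapsto (i+s_j, j)$ read lexicographically. I would prove this column by column: the $j$-th column of $\cF_{\d^\star}$ is $\{x^i\eps_j \mid 0 \leq i \leq p_j - s_j - 1\}$, i.e.\ the monomials $x^i\eps_j$ whose weight $i+s_j$ runs over the integer interval $[s_j, p_j-1]$. Because $s_j \leq \bs \leq p \leq p_j$ (this is where~\eqref{eq:hyp1} is really used), this shows that $\cF_{\d^\star}$ contains \emph{every} monomial of weight $\leq p-1$, together with precisely the weight-$p$ monomial $x^{p-s_j}\eps_j$ for each $j$ with $1 \leq j \leq u$, and nothing of weight $\geq p+1$. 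Since there are $\sum_{j=1}^m (p - s_j) = pm - \sum_j s_j$ monomials of weight $\leq p-1$ and exactly $m$ monomials of weight $p$ (one per column, ordered among themselves by position), and since $\Sigma - (pm - \sum_j s_j) = u$ with $0 \leq u < m$, the $\Sigma$ smallest monomials are: all of weight $\leq p-1$, plus the $u$ monomials of weight $p$ of smallest position, namely those in columns $1, \dots, u$. This is exactly $\cF_{\d^\star}$; being downward closed of size $\Sigma$, it consists of the $\Sigma$ globally smallest monomials and is therefore the lexicographically least element of $\Mon_\Sigma$, so in particular $\cF_{\d^\star} \leq_{lex} \cF$ for every $\cF \in \Mon_\Sigma$.

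It remains to verify that $\cF_{\d^\star}$ lies in the set minimised in Equation~\eqref{eq:drg}, namely $\sum_{i=1}^l d^\star_i \leq \sum_{i=1}^l f_i$ for all $1 \leq l \leq m$: for $l = m$ it is the equality $\sum_i d^\star_i = \Sigma = \sum_i f_i$ noted above, and for $l < m$ it reads $\sum_{i=1}^l d^\star_i = \sum_{i=1}^l p_i - \sum_{i=1}^l s_i \leq \sum_{i=1}^l (f_i + s_i) - \sum_{i=1}^l s_i = \sum_{i=1}^l f_i$, which is exactly hypothesis~\eqref{eq:hyp2}. Combining both parts: $\cF_{\d^\star}$ belongs to the set of Equation~\eqref{eq:drg} and is $\leq_{lex}$ every member of $\Mon_\Sigma$, a fortiori every member of that set, so it is its lexicographic minimum; hence $\cF_{\bdr_\Sigma} = \cF_{\d^\star}$, i.e.\ $\bdr_\Sigma = \d^\star = \p - \s$ and $\bpr_\Sigma = \p$. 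I expect the only part requiring real care to be the bookkeeping in the middle paragraph — checking which weight-$p$ monomials actually land in $\cF_{\d^\star}$, matching that count with the remainder $u$ of the Euclidean division, and confirming that the set of the $\Sigma$ smallest monomials is genuinely column-closed — and all of this hinges on~\eqref{eq:hyp1}, which forces $s_j \leq p$ in every column.
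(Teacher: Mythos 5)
Your proof is correct and follows essentially the same route as the paper's: you show via \eqref{eq:hyp1} that $\cF_{\p-\s}$ is precisely the set of the $\Sigma$ smallest monomials for $\prec_{\s\text{-TOP}}$ (the paper does this by computing the row degree of the $k$-th smallest monomial, you by a column-by-column weight count — same bookkeeping), then use \eqref{eq:hyp2} to check the partial-sum constraints of Eq.~\eqref{eq:drg} and conclude by minimality. No gaps; the identification $\Sigma - (pm - \sum_j s_j) = u$ and the ordering of the weight-$p$ monomials by position are exactly the points the paper's proof also relies on.
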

This nice form of row degree was already observed in particular cases in
different but related settings. To the best of our knowledge, it can be found 
in \cite[Proposition 6.1]{Vil97} for row degrees of minimal generating matrix
polynomial but with no shift, in \cite[Corollary 1]{PS07} for dimensions of
blocks in a shifted Hessenberg form but the link to row degree is unclear and no
shift is discussed (shifted Hessenberg is not related to our shift $\s$), and in
\cite[after Eq. (2)]{jeannerod_essentially_2005} for kernel basis were $m=2n$
with no shifts.
\begin{proof}
  Denote again $\Sigma=\sum_{i=1}^n f_i$.  Let $\bcF$ be the first $\Sigma$
  monomials of $\Kx^m$ for the $\prec_{\s-TOP}$ ordering. Let
  $\p = (p+1,\dots,p+1,p,\dots,p)$ be the candidate row degrees as in the
  theorem statement and $\d = \p - \s$ be the corresponding pivot degrees. Note
  that Equation~\eqref{eq:hyp1} implies that $p \geq \bs$ so $\d \in \N^m$.

  First we show that Equation~\eqref{eq:hyp1} implies $\bcF = \cF_{\d}$.  For
  the first part, in order to prove $\bcF = \cF_{\d}$, we need to show that
  $d_i = \min\{d \in \N \ | \ x^d \eps_i \notin \bcF\}$. We already know that
  $d_i \in \N$. We will need to study the row degrees of the first monomials to
  conclude. The monomials of $\Kx^m$ of $\s$-row degree $r$ ordered increasingly
  for $\prec_{\s-TOP}$ are $[x^{r-s_i} \eps_i]$ for increasing $1 \leq i \leq m$
  such that $\ s_i \leq r$. There are $m$ such monomials when $r\geq \bs$. The
  monomials of $\s$-row degree less than $\bs$ are $\{x^i \eps_j\}_{i+s_j<\bs}$
  and their number is $\sum_{i=1}^m(\bs - s_i)$. From this we can deduce that
  the row degree of the $n$-th smallest monomial is
  $
  \left \lfloor (n - 1 - \sum_{i=1}^m (\bs-s_i))/m \right \rfloor + \bs
  = \left \lfloor (n - 1 + \sum_{i=1}^m s_i)/m \right  \rfloor
  $
  provided that $n \geq \sum_{i=1}^m (\bs-s_i) + 1$. We can now remark that the
  $(\Sigma+1)$-th smallest monomial has $\s$-row degree $p$. More precisely, the
  $(\Sigma+1)$-th smallest monomial is the $(u+1)$-th monomial of row-degree
  $r$, so $\bcF$ is equal to all monomials of row degree less than $p$ and the
  first $u$ monomials of row degree $p$. This proves
  $d_i = \min\{d \in \N \ | \ x^d \eps_i \notin \bcF\}$ and $\bcF = \cF_{\d}$.


  Second we deduce from Equation~\eqref{eq:hyp2} that for all $1 \leq l \leq m$,
  $\sum_{i=1}^l d_i = \sum_{i=1}^l (p_i-s_i) \leq \sum_{i=1}^l f_i$ , so
  $\cF_{\bdr_r} \leq_{lex} \cF_{\d}$ by Theorem~\ref{thm:ineqrdeg} and finally
  $\cF_{\bdr_r} = \cF_{\d}$ because $\bcF$ is the smallest set of $\Sigma$
  monomials.
\end{proof}

\begin{example}
  Here we provide 3 examples of generic row pivot $\bdr_{\Sigma}$ and row degree
  $\bpr_{\Sigma}$: Corollary~\ref{cor:geneqrdeg} applies only to the first
  situation because the second and third situations are made so that
  Eq.~\eqref{eq:hyp1} and respectively Eq.~\eqref{eq:hyp2} are not satisfied.
  Let $m=n=3$ and $\s=(0,2,4)$ so that $\bs = 4$ and $\sum(\bs-s_i)=6$.

  In the first situation $\f=(6,1,0)$, so $\sum(f_i+s_i)= 4*m+1$ and using
  Corollary~\ref{cor:geneqrdeg} we get $\bpr_{\Sigma}=(5,4,4)$ from
  Eq.~\eqref{eq:nicepivotdeg} and $\bdr_{\Sigma}=(5,2,0)$.  In the second
  situation, $\f=(3,0,0)$ and Eq.~\eqref{eq:hyp1} is not satisfied. We use
  Theorem~\ref{thm:drspecialcase} to get $\bdr_{\Sigma}=(3,0,0)$ from
  Eq.~\eqref{eq:drg} and $\bpr_{\Sigma}=(3,2,4)$.  Finally in the third
  situation, $\f=(3,3,1)$ and Eq.~\eqref{eq:hyp2} is not satisfied. We use
  Theorem~\ref{thm:drspecialcase} to get $\bdr_{\Sigma}=(3,3,1)$ from
  Eq.~\eqref{eq:drg} and $\bpr_{\Sigma}=(3,5,5)$.  Let $\cF_1,\cF_2,\cF_3$ be
  the respective families of monomial of the three situations. We picture these
  families in the following table, where $Mon$ are the first monomials for
  $\prec_{\s-TOP}$
  $$
  \begin{array}{r|ccccccccc}
    \hline
    Mon & \eps_1 & X\eps_1 & X^{2}\eps_1 & \eps_2 & X^{3}\eps_1 & X\eps_2 & X^{4}\eps_1 & X^{2}\eps_2 & \eps_3 \\
    \hline
    \rdeg_{\s} & 0 & \multicolumn{1}{|c}{1} & \multicolumn{2}{|c}{2} & \multicolumn{2}{|c}{3} & \multicolumn{3}{|c}{4} \\  
    \hline
    \hline
    \cF_1 & \bullet & \bullet & \bullet & \bullet & \bullet & \bullet & \bullet & & \\
    \hline
    \cF_2 & \bullet & \bullet & \bullet &   &   &   &   &   &  \\
    \hline
    \cF_3 & \bullet & \bullet & \bullet & \bullet &  & \bullet &   & \bullet  &  \bullet \\
    \hline
  \end{array}
  $$  
\end{example}


\section{Uniqueness Results on SRFR}\label{sec:unique}
Recall the SRFR, defined in Section~\ref{sec: SRFR}.
In particular, $a_1,\ldots,a_n\in \Kx$ with degrees $f_i:=\deg(a_i)$ and $\u:=(u_1,\ldots,u_n)\in \Kx^{n}$ such that $\deg(u_i) < f_i$ and $0< N_i\leq f_i$ for $1\leq i \leq n$, $0 < D\leq \min_{1\leq i \leq n}\{f_i\}$.
We want to reconstruct $(\v,d)=(v_1,\ldots,v_n,d)\in \Kx^{1\times (n+1)}$ such that
$
  v_i \equiv du_i \bmod a_i, \deg(v_i)<N_i, \deg(d)<D.
$

We consider $\cM=\langle a_i(x)\eps'_i\rangle$ and we denote by $S_{\u}$ the set of tuples which verify Eq.~\eqref{eq:SRFR}.
\begin{lemma}
  For the shift $\s=(-N_1,\ldots,-N_n, -D)\in \mathbb{Z}^{n+1}$, we have
  $(\v,d)\in S_{\u} \Leftrightarrow (\v,d)\in \cA_{\cM,R_{\u}}$ with
  $\rdeg_{\s}((\v,d))<0$, where
  \begin{equation}
    \label{eq:Ru}
    R_{\u} := \begin{bmatrix}
      \mathsf{Id}_n \\
      - \u
    \end{bmatrix} \in \Kx^{(n+1) \times n}
  \end{equation}
\end{lemma}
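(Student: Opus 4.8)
The plan is simply to unwind the two conditions on the right-hand side into the three conditions defining $S_{\u}$ in Equation~\eqref{eq:SRFR}. First I would compute the product $\p R_{\u}$ for $\p = (\v, d) = (v_1, \dots, v_n, d) \in \Kx^{1 \times (n+1)}$: by the block shape of $R_{\u}$ in Equation~\eqref{eq:Ru} we get $\p R_{\u} = \v - d\u = (v_1 - d u_1, \dots, v_n - d u_n)$.

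Next, recall that $\cA_{\cM, R_{\u}} = \ker(\hat{\varphi}_{R_{\u}})$ where $\hat{\varphi}_{R_{\u}}(\p) = \p R_{\u} \bmod \cM$, so $\p \in \cA_{\cM, R_{\u}}$ if and only if $\p R_{\u} \equiv \0 \bmod \cM$. Since $\cM = \langle a_i(x)\eps'_i \rangle_{1 \le i \le n}$, this congruence splits componentwise into $v_i - d u_i \equiv 0 \bmod a_i$ for $1 \le i \le n$, i.e.\ $v_i \equiv d u_i \bmod a_i$; this is exactly the first family of conditions in Equation~\eqref{eq:SRFR}.

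Then I would translate the shifted row degree condition. By Definition~\ref{def:rowdeg} with the shift $\s = (-N_1, \dots, -N_n, -D)$, the $\s$-row degree of $(\v, d)$ is $\rdeg_{\s}((\v,d)) = \max\bigl(\max_{1 \le i \le n}(\deg(v_i) - N_i),\ \deg(d) - D\bigr)$. Using the convention $\deg(0) = -\infty$, this maximum is negative if and only if $\deg(v_i) - N_i < 0$ for every $i$ and $\deg(d) - D < 0$, i.e.\ $\deg(v_i) < N_i$ and $\deg(d) < D$ --- precisely the remaining degree constraints in Equation~\eqref{eq:SRFR}. Conjoining this with the previous paragraph shows that $(\v, d) \in \cA_{\cM, R_{\u}}$ with $\rdeg_{\s}((\v,d)) < 0$ is equivalent to $(\v, d) \in S_{\u}$, which proves the lemma.

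There is no real obstacle here: the statement is a dictionary between the SRFR formulation and the relation-module formulation, and every step is a direct unwinding of definitions. The only point requiring a moment of care is the correspondence between the strict inequalities in the degree bounds and the strict inequality $\rdeg_{\s}((\v,d)) < 0$, together with the harmless edge case in which some $v_i$ or $d$ vanishes, which the $\deg(0) = -\infty$ convention handles.
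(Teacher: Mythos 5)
Your proof is correct and follows exactly the same route as the paper's: compute $(\v,d)R_{\u} = \v - d\u$, identify membership in $\cA_{\cM,R_{\u}}$ with the componentwise congruences $v_i \equiv d u_i \bmod a_i$, and unwind $\rdeg_{\s}((\v,d))<0$ into the degree bounds $\deg(v_i)<N_i$, $\deg(d)<D$. The only difference is that you spell out the steps (including the $\deg(0)=-\infty$ edge case) in more detail than the paper's one-sentence argument.
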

\begin{proof}
  Observe that $(\v,d)\in S_{\u}$ if and only if it satisfies the equation
  $\v - d \u \equiv (\v,d)R_{\u}\equiv 0 \bmod \cM$, that is
  $(\v,d) \in \cA_{\cM,R_{\u}}$, and if it satisfies the degree conditions
  equivalent to
  $\rdeg_{\s}((\v,d))=\max\{\deg(v_1)-N_1,\ldots,\deg(v_n)-N_n,\deg(d)-D\}<0$ (see
  Definition~\ref{def:rowdeg}).
\end{proof}{}{}

So in order to study the solutions of the SRFR we introduce the $\s$-row degrees
$\brho_{\u}:=\brho_{R_{\u}}$ and the $\s$-pivot indices
$\bdelta_{\u} := \bdelta_{R_{\u}}$ of $A_{R_{\u},\cM}$ (see
Definition~\ref{def:rowPivrelationMod}).
As remarked just after the \textit{predictable degree property}
(Proposition~\ref{prop:predictable}), 
\begin{equation}\label{eq:dimSu}
  \dim_{\K} S_{\u} = \dim_{\K} (A_{R_{\u},\cM})_{<0} = - \sum_{\rho_{\u,i} < 0} \rho_{\u,i}.
\end{equation}{}
We can now show our main theorem about uniqueness in SRFR for generic instances
$\u$.
\begin{theorem}\label{thm:genUnicity}
  Assume $\sum_{i=1}^{n}f_i=\sum_{i=1}^n N_i+D-1$. Then for generic
  $\u=(u_1,\ldots,u_n) \in \Kx^{1 \times n}$, the solution space $S_{\u}$ has
  dimension $1$ as $\K$-vector space.
\end{theorem}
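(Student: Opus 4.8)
The plan is to apply the machinery built in Section~\ref{sec:relModule} to the specific relation module $\cA_{\cM,R_{\u}}$, treating $\u$ as the varying parameter. By Equation~\eqref{eq:dimSu}, it suffices to show that for generic $\u$ the $\s$-row degrees $\brho_{\u}=(\rho_{\u,1},\dots,\rho_{\u,n+1})$ of $\cA_{\cM,R_{\u}}$ have exactly one strictly negative entry, equal to $-1$, and all others $\geq 0$. So first I would compute the ``target'' generic pivot degree $\bdr_{\Sigma}$ predicted by Corollary~\ref{cor:geneqrdeg} for $m=n+1$, $n$ columns, shift $\s=(-N_1,\dots,-N_n,-D)$, and invariants of degrees $f_1\geq\dots\geq f_n$ (with $f_{n+1}=0$). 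Here $\Sigma=\sum_{i=1}^n f_i$, and the hypothesis $\sum f_i=\sum N_i+D-1$ means $\sum_{i=1}^{n+1}(f_i+s_i)=\Sigma-\sum N_i-D=-1$. Thus the ``expected row degree'' $\bpr_{\Sigma}=\bdr_{\Sigma}+\s$ should sum to $-1$ over $n+1$ entries, and the content of Theorem~\ref{thm:ineqrdeg}/Eq.~\eqref{eq:drg} is that $\bpr_{\Sigma}$ is the lexicographically least admissible tuple; I expect this forces $\bpr_{\Sigma}=(0,\dots,0,-1)$, i.e.\ a single negative row degree equal to $-1$.

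The key step is therefore to verify that $\cF_{\bdr_{\Sigma}}$ as defined by Eq.~\eqref{eq:drg} indeed corresponds to row degrees $(0,\dots,0,-1)$. Concretely, I would argue: the candidate pivot degrees $\d$ with $d_i=N_i$ for $1\leq i\leq n$ and $d_{n+1}=D-1$ satisfy $\sum_{i=1}^{n+1}d_i=\Sigma$ (using the hypothesis) and the partial-sum constraints $\sum_{i=1}^l d_i\leq\sum_{i=1}^l f_i$ after reordering $\d$ non-increasingly — this last inequality is exactly where the degree bounds $N_i\leq f_i$, $D\leq\min f_i$ enter, together with the assumption that the $f_i$ are non-increasing. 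One must check that this $\cF_{\d}$ is in fact the \emph{lexicographically smallest} family in $\Mon_{\Sigma}$ satisfying those partial-sum constraints; since a family with strictly smaller lex order would have to use a monomial of smaller $\prec_{\s\text{-TOP}}$-degree, and the counting of monomials by $\s$-row degree (as in the proof of Theorem~\ref{thm:drspecialcase}) shows the first $\Sigma$ monomials are precisely those of $\s$-row degree $<0$ plus possibly some of row degree $0$, one gets that $\bpr_{\Sigma}$ has all entries $\leq 0$, sum $-1$, hence is $(0,\dots,0,-1)$ up to the ordered-weak-Popov ordering. Then Corollary~\ref{cor:geneqrdeg} says $\bdelta_{R_{\u}}=\bdr_{\Sigma}$ for generic $M=R_{\u}$ among \emph{all} matrices in $\Kx^{(n+1)\times n}$.

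The main obstacle is that Corollary~\ref{cor:geneqrdeg} gives genericity over the full space $\Kx^{(n+1)\times n}$, whereas $R_{\u}$ ranges only over the affine subvariety of matrices of the special form $\begin{bmatrix}\mathsf{Id}_n\\-\u\end{bmatrix}$. I must show the nonzero $\Sigma$-minor polynomial $R$ produced in the proof of Corollary~\ref{cor:geneqrdeg} does not vanish identically on this subvariety — equivalently, exhibit \emph{one} instance $\u$ for which $\{m R_{\u}\}_{m\in\cF_{\bdr_{\Sigma}}}$ is linearly independent in $\cK$. This is where I expect to have to do real work: I would mimic the explicit Krylov-matrix construction in the proof of Theorem~\ref{thm:clement}, but adapted to matrices with an identity block on top, to produce a concrete $\u$ achieving the bound; the identity block fixes $\v_i=\eps_i$ for $i\leq n$, so only the last row $-\u$ is free, and one needs to choose $\u$ so that the required columns $x^j\eps_i$ (for $j<N_i$) together with the Krylov columns $x^j(-\u)$ (for $j<D-1$) span a $\Sigma$-dimensional space modulo $\cM$. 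Once such a $\u$ exists, the minor $R$ is a nonzero polynomial in the coefficients of $\u$, and $R(\u)\neq 0$ gives $\dim_{\K}S_{\u}=-\sum_{\rho_{\u,i}<0}\rho_{\u,i}=1$, proving the theorem. A secondary check is that the relevant rank $\rank(Mo_{R_{\u}})=\rank(\varphi_{R_{\u}})$ equals $\Sigma$ for such $\u$, which follows because the $\Sigma$ chosen monomials already map to independent vectors, forcing $\varphi_{R_{\u}}$ to be surjective onto the $\Sigma$-dimensional space $\cK$.
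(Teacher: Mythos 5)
Your proposal follows essentially the same route as the paper: reduce via Eq.~\eqref{eq:dimSu} to showing the generic $\s$-row degree is $(0,\dots,0,-1)$ using Theorem~\ref{thm:drspecialcase}, then show the genericity minor of Corollary~\ref{cor:geneqrdeg} does not vanish identically on the subvariety of matrices $R_{\u}$ by adapting the Krylov construction of Theorem~\ref{thm:clement} to the identity-block structure. The one step you defer as ``real work'' is exactly what the paper carries out: since $d_i=N_i\leq f_i$ the first $n$ blocks are already in Krylov form so only the last row is built, and the shifts remain admissible because $s'_l\geq d_l-(d_{n+1}-t_l)=f_l-(D-1)\geq 0$ by $D\leq\min_i f_i$, which replaces the non-increasing-$\d$ argument of Theorem~\ref{thm:clement} that is unavailable here.
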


\begin{proof}
  By the previous considerations (see Eq.~{\eqref{eq:dimSu}}) it is sufficient
  to prove that for generic $\u \in \Kx^{n+1}$,
  $\brho_{\u} = (0, \dots, 0,-1)$.

  First, we need to show that the generic $\s$-row degree $\bpr_{\Sigma}$ is the
  expected nice form $\p = (0, \dots, 0,-1)$ ($p=-1$ and $u=n=m-1$ because
  $\sum (f_j + s_j) = -1 \cdot m + (m-1)$, see Eq.~(\ref{eq:nicepivotdeg})).  It
  remains to check that we verify the hypotheses of
  Theorem~\ref{thm:drspecialcase}. By Equation~\eqref{eq:hyp1},
  $\bs \leq -1 = p$. By Equation~\eqref{eq:hyp2},
  $\sum_{i=1}^l p_i \leq 0 \leq \sum_{i=1}^l (f_i + s_i)$ for all
  $0 \leq l \leq m-1$ since $f_i + s_i \geq 0 \geq p_i$ for all $i$.
  
  It remains to show that there exists a matrix of the form $R_{\u}$ which
  satisfies the genericity condition of Corollary~\ref{cor:geneqrdeg}. Hence, the
  genericity condition is a non-zero polynomial when evaluated on matrices
  $R_{\u}$ and finally we have our result for generic $\u$.

  In order to do so, we show that the construction of the proof of the
  Theorem~\ref{thm:clement} provides a matrix of the form $R_{\u}$ in our case.
  In our case $(d_1,\ldots,d_{n+1})=(N_1,\ldots,N_n,D-1)$ and $m=n+1$, where
  $f_{n+1}=0$. In particular, by SRFR assumptions, for any $1 \leq i \leq n$,
  $d_i\leq f_i$ and so the matrices $\barK_i=[K(\u_i,d_i)]$ are already in the
  Krylov form. On the other hand, the last matrix is in the form
  $\barK_{n+1} = [K(x^{d_j} \u_j,t_j)]_{1 \leq j \leq n}$ where
  $d_j+t_j=f_j$. Then $\babarK_{n+1} = [K(\sum_{j=1}^n x^{s'_j} \u_j,d_j)]$ and
  we need to prove that $s'_j \geq 0$ differently because we don't have the
  assumption about the non-increasing $\d$. Recall that $s'_j$ is $s_j$ minus the number
  of columns added to extend the matrix to the left. This number of columns is
  at most $d_{n+1}$ minus the size $t_l$ of the current block. So
  $s'_l \geq d_l- (d_{n+1} - t_l) = d_l - (d_{n+1} - (f_l-d_l)) = f_l - d_{n+1}
  \geq 0$ because $d_{n+1} = D-1 \leq D \leq \min(f_i) $ and so the construction
  works.
\end{proof}



\end{document}